\documentclass[letterpaper, 10 pt, conference]{ieeeconf}  %

\usepackage[utf8]{inputenc}

\usepackage{graphicx}
\usepackage{graphics}
\usepackage{amssymb,amsmath,latexsym,amsfonts,amsthm}
\usepackage{booktabs}
\usepackage{array}
\usepackage{soul}
\usepackage{comment}
\usepackage[hyphens]{url}
\graphicspath{{figs/}}
\usepackage[usenames,dvipsnames]{xcolor}
\usepackage{siunitx}
\usepackage{multirow}
\usepackage{subcaption} %
\usepackage{color,soul}
\usepackage{lipsum}
\usepackage{mathtools, cuted}
\usepackage[export]{adjustbox}
\usepackage{dblfloatfix}    %
\usepackage{bbm}
\usepackage{blindtext}
\usepackage{bm}
\usepackage{cite} %
\usepackage[english]{babel}
\usepackage{cleveref}
\usepackage{textgreek} %
\usepackage{mwe}
\usepackage{xcolor}
\usepackage{soul}
\usepackage{savesym} %
\usepackage{siunitx}
\savesymbol{degree}
\usepackage{gensymb} 
\restoresymbol{gensymb}{degree}
\usepackage{tabstackengine}
\stackMath
\setstackgap{L}{16pt} %
\setstacktabbedgap{6pt} %
\def\lrgap{\kern6pt}

\def\xbracketMatrixstack#1{\left[\lrgap\tabbedCenterstack{#1}\lrgap\right]}

\newtheorem{assumption}{Assumption}%
\newtheorem{theorem}{Theorem}%
\newtheorem{lemma}{Lemma}%
\newtheorem{remark}{Remark}%
\newtheorem{proposition}{Proposition}%

\newcommand{\norm}[1]{\left\lVert#1\right\rVert}

\setlength{\textfloatsep}{0.3cm} %
\DeclareCaptionFont{mysize}{\fontsize{10pt}{12pt}\selectfont} %
\captionsetup{font=mysize,skip=2pt} %
\setlength{\intextsep}{1.4ex} %
\usepackage{tikz}
\usepackage{pgfplots}
\pgfplotsset{compat = newest}  %
\usetikzlibrary{positioning,shapes.geometric}
\usetikzlibrary{arrows, arrows.meta}
\usetikzlibrary{patterns,patterns.meta}
\usepackage[outline]{contour}
\contourlength{2.5pt}

\newcommand\epsi{0.05} %
\newcommand\VL{0.025}  %
\newcommand\pr{0.002}  %
\newcommand\prr{0.006}  %
\newcommand\axlim{0.14}
\newcommand\fntsz{\fontsize{15}{0}\selectfont}
\definecolor{green}{gray}{0.8}
\definecolor{blue}{gray}{0.6}
\definecolor{orange}{gray}{0.4}
\definecolor{lightGreen}{HTML}{6fcfff}

\IEEEoverridecommandlockouts  %
\overrideIEEEmargins  %

\title{\LARGE \bf Parameter Conditions to Prevent Voltage Oscillations Caused by LTC-Inverter Hunting on Power Distribution Grids}

\author{Jaimie Swartz$^{1}$, Federico Celi$^{2}$, Fabio Pasqualetti$^{2}$, and Alexandra von Meier$^{1}$%
\thanks{*This work supported in part by the U.S. Department of Energy, Award DE-EE0008008.}%
\thanks{$^{1}$J. Swartz and A. von Meier are with the Department of Electrical Engineering and Computer Science,
        University of California at Berkeley, Berkeley, California, USA
        {\tt\small \{jaimie.swartz,vonmeier\}@berkely.edu}}%
\thanks{$^{2}$F. Celi and F. Pasqualetti are with the Department of Mechanical Engineering, University of California at Riverside, Riverside, California, USA
        {\tt\small \{fceli,fabiopas\}@engr.ucr.edu}}%
}

\newcommand\arxiv{}

\begin{document}

\maketitle

\thispagestyle{empty} %
\pagestyle{empty} %

\begin{abstract}
    As more distributed energy resources (DERs) are connected to the power grid, it becomes increasingly important to ensure safe and effective coordination between legacy voltage regulation devices and inverter-based DERs. In this work, we show how a distribution circuit model, composed of two LTCs and two inverter devices, can create voltage oscillations even with reasonable choices of control parameters. By modeling the four-device circuit as a switched affine hybrid system, we analyze the system's oscillatory behavior, both during normal operation and after a cyber-physical attack. Through the analysis we determine the specific region of the voltage state space where oscillations are possible and derive conditions on the control parameters to guarantee against the oscillations. Finally, we project the derived parameter conditions onto 2D spaces, and describe the application of our problem formulation to grids with many devices. 
\end{abstract}

\section{Introduction}
The traditional voltage regulation problem is to design load-tap changer (LTC), voltage regulator, and capacitor bank control parameters such that, over a minutes-to-day duration, the distribution grid voltage is kept within 5\% of the nominal voltage (ANSI C84.1 standard) to avoid interrupting or damaging customer equipment. The addition of naively controlled DERs such as solar PV makes this voltage regulation problem harder \cite{Ahmed} and can cause LTCs to actuate much more frequently, reducing their lifespan. Depending on the control logic and design parameters, connecting smart inverters can either alleviate \cite{Chamana} these voltage fluctuations or be a source of adverse interactions \cite{unintended_journal}. 

One type of adverse interaction is \emph{device hunting} which we define as one or more devices actuating in a repeated sequence that results in periodic voltage oscillations. Hunting among LTCs has been observed by utilities since the 1980s \cite{smith_hunting,Hiskens2007} and has been modeled as a hybrid system in the literature \cite{Hiskens2007}. Inverter-based DERs may be able to solve these problems if their control parameters are set appropriately. However, current inverter standards \cite{IEEE_1547} require inverter control parameters to be adjustable by a remote entity's communication network, which introduces a potential vulnerability to cyberattacks \cite{sahoo2019cyber}. Bad parameters sent to inverters on a circuit, whether deliberately or by mistake, can trigger adverse interactions \cite{Ciaran}. This motivates our investigation into how a poor choice of device parameters can lead to hunting, as simulated in Fig. \ref{motex}.

\begin{figure}[!h]
    \centering
    \begin{subfigure}{0.235\textwidth}
         \centering
       \includegraphics[width=\textwidth]{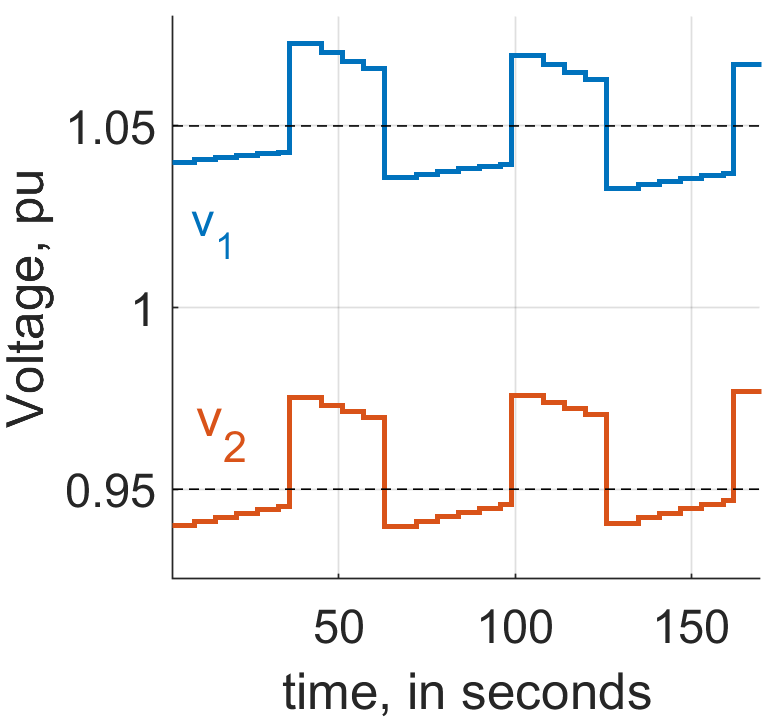}
        \caption{Device hunting for $g=0.5$}
     \end{subfigure}
     \hfill
    \begin{subfigure}{0.235\textwidth}
             \centering
       \includegraphics[width=\textwidth]{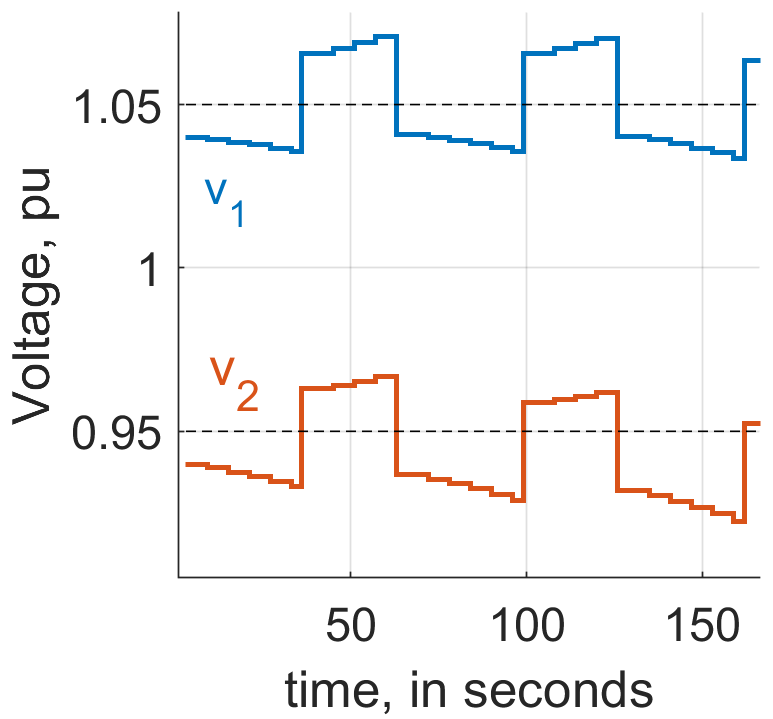}
        \caption{Device hunting for $g=-0.5$}
     \end{subfigure}
     \caption{Simulation of two LTCs and two inverters that create voltage oscillations with a period of 140 seconds. Circuit is in Fig. \ref{LTC_inv_circuit} and parameters are the defaults listed in Table \ref{tab:base_parms}.}
     \label{motex} %
\end{figure}
\emph{Related work:}
There are two common approaches when selecting distribution-grid device parameters for voltage control: an approach based on rules of thumb followed by simulation, and an optimization-based approach. For basic operation of LTCs and parameter rules of thumb we refer to the text \cite[Chapter 9.3]{Gonen_distr_text}, and to papers \cite{Licheng_journal,Ahmed}. One limitation of this approach is the assumed timescale separation between the LTC and inverters, which may not hold when LTCs delays are shortened to handle increased voltage variability \cite{Shahrzad}. Moreover, the simulations in these papers are not sufficient to guarantee against the possibility of sustained voltage oscillations.

Meanwhile, optimization methods focus on addressing the non-convexity of the mixed-integer optimal power flow problem resulting from including both continuous (smart inverter) and discrete (LTC and voltage regulator) dynamics \cite{Christa_opttap,Garcia_opttap}. However, the optimal parameter solution determines tap positions and power dispatch rather than the device parameters in the control law. As such, the optimal solution does not provide significant insights into the symbolic relationship between control parameters and adverse interactions.

Our goal of deriving symbolic parameter conditions for hybrid systems stability is a challenging task and remains relatively unexplored in control systems literature. The design of switching strategies for stabilization commonly assumes one can switch modes at any time, rather than switching according to parameterized conditions \cite{Mignone,Antsaklis}. Linear parameter variation (LPV) literature commonly assumes the parameters to be time-varying, but here our parameters are time-independent and we want to solve for them symbolically \cite{Lim}. Therefore, our methodology leverages our familiarity with the specific system's model dynamics.

\emph{Paper contribution:}
We seek to analyze how parameters of LTCs and inverters prevent or contribute to voltage oscillations created by device hunting. The results yield two distinct benefits: (i) the parameter conditions can provide intuition for improving the rules of thumb used to operate these devices in industry and (ii) the conditions can be directly checked on each device without simulations, enabling real-time checks on incoming parameter updates by remote entities.

\emph{Paper Organization:} The remainder of the paper is organized as follows. In Sec. \ref{probform1_section}, the system equations are presented and the parameter condition problem is solved for two subsystems. In Sec. \ref{probform2_section}, the parameter condition problem is solved for the full system. In Sec. \ref{results_section} the conditions on the full system conditions are illustrated. In Sec. \ref{conclusion_section} we present conclusions and future work. %
\ifdefined\conference %
    \input{input_files/ref_to_arxiv}
    
\section{System Dynamics and Subsystem Analysis}
\label{probform1_section}

\subsection{Notation}
Let $\mathbb{R}^c$ be the $c$-dimensional vectors of real numbers. Given a time-varying vector $x$, we let $x[k]$ be its value at time $k$, and $x_i[k]$ be the value of the $i$-th element of $x$ at time $k$. Let $x^\top$ be the transpose of vector $x$. Let $||.||_i$ denote the $\ell_i$ vector norm for $i=1,2,\infty$. We define the \emph{margin} $M(S,d)$ as the set of points $x$ that are within distance $d$ from a set $S$. That is, $M(S,d) \coloneqq \left \{ x \notin S : \exists ~ s_i \in S : \norm{x - s_i}_2 \leq d \right \}$. We define a \emph{partition} of a set $S$ to be a collection of subsets $S_i$, $i=1,...k$ such that $\bigcup_{i=1,...,k} S_i=S$ and $S_i \cap S_j=\emptyset$ for all $i\neq j$. We use $\rightarrow$ and $\nrightarrow$ to indicate possible and impossible transitions between partitions, respectively. Let $S^\prime$ denote the complement of set $S$, and $S \setminus Q$ be the part of the set $S$ that is not in $Q$.

\subsection{Overview of Four-Device System}

\begin{figure}[!h]  
   \centering 
   \includegraphics[width=.47\textwidth]{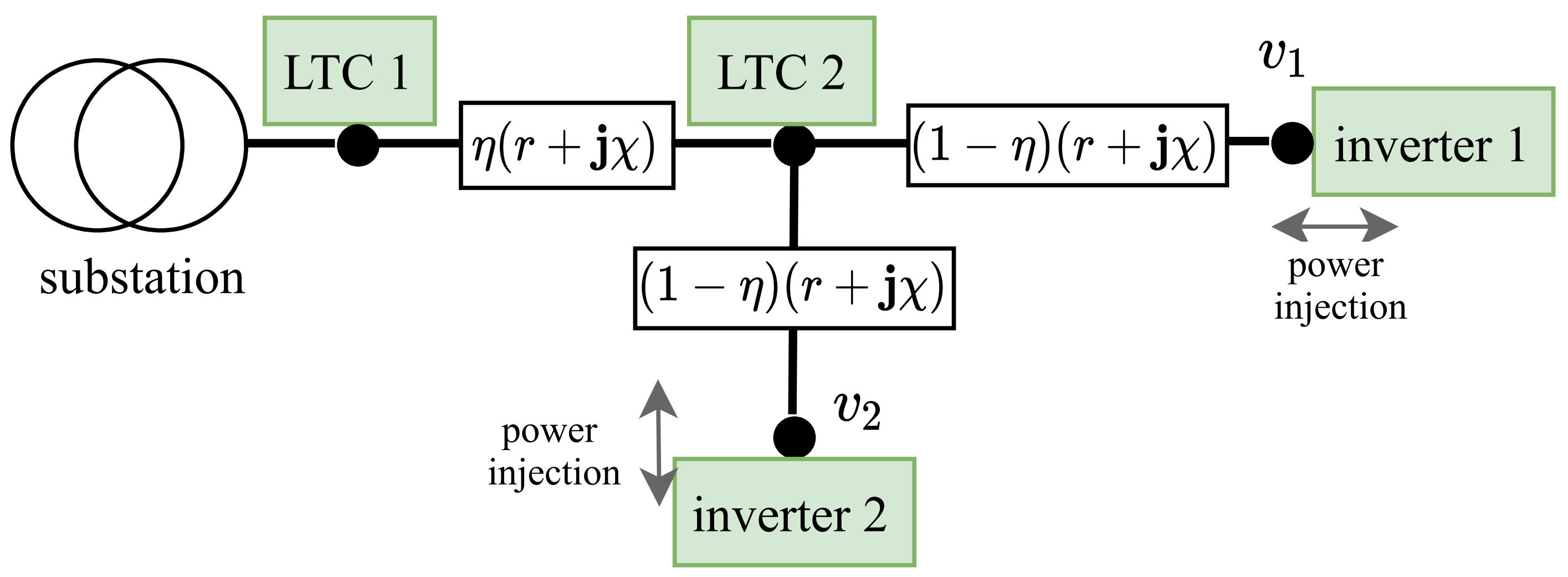}
    \caption{Distribution circuit with LTC and inverter devices}
    \label{LTC_inv_circuit} 
\end{figure}

We model four devices operating on the radial circuit shown in Fig. \ref{LTC_inv_circuit}. 
 Suppose there are constant loads at all nodes shown. LTC1 and inverter1 regulate voltage $v_1$, while LTC2 (sometimes called a line voltage regulator) and inverter2 regulate voltage $v_2$. We let $v^{\text{diff}}\coloneqq v_1-v_2$. The LTCs estimate these voltages using line drop compensation, and the inverters measure these voltages with their internal sensors. All devices operate with fixed time delay logic; that is, they only respond when the voltage remains outside the deadband for a certain delay $d$. 
We denote the upper deadband boundary as $v^+=v^{ref}+\varepsilon$, and lower deadband boundary as $v^-=v^{ref}-\varepsilon$. Then we define the deadband
\begin{equation}
    D\coloneqq \{(v_1,v_2): v^-\leq v_1 \leq v^+,~v^-\leq v_2\leq v^+\}.\\ %
\end{equation}
Because there are no active dynamics when both voltages are in the deadband, the set $D$ is invariant. When $v_i> v^+$ we say $v_i$ has an \emph{overvoltage}, and when $v_i<v^-$ we say $v_i$ has an \emph{undervoltage}. With the shorthand $v_i \in D$ or $v_i \notin D$, we refer to whether $v^-<v_i<v^+$ is satisfied or not.

Next, we define normal operating $W\coloneqq (H \cap P)$ where %
\begin{align} 
  \phantom{P} %
  &\begin{aligned}
    \mathllap{H} &\coloneqq \{(v_1,v_2): (v_1>v^- ~\text{and}~ v_2>v^-) ~\text{or}~\\& (v_1<v^+ ~\text{and}~ v_2<v^+)\}
  \end{aligned} \label{H} \\
  &\begin{aligned}
    \mathllap{P} &\coloneqq \{(v_1,v_2): \norm{v[k] - v^{ref}}_\infty<3\varepsilon\}. %
  \end{aligned}\label{P}
\end{align}
The hourglass-shaped set $H$ disallows one voltage from being above $v^+$ when the other is below $v^-$. $P$ bounds the distance each voltage can be away from $v^{ref}$.

\begin{assumption}(normal operation)
    A given system $\Sigma$ operates in normal operating states $W$ defined by \eqref{H} and \eqref{P}. \label{assump:ov_uv}
\end{assumption}
\vspace{-0.1in}
We consider states outside $W$ to be abnormal operating conditions that should be addressed with the grid's protection system rather than the system dynamics analyzed in this work. We are interested in the system behavior while $v \in W$, especially if the initial condition (IC) starts in $W$ and eventually leaves $W$. 

 The goal is to coordinate the device actions so that $v_1$ and $v_2$ land inside the deadband without device hunting. Table \ref{tab:base_parms} summarizes the notation for the states, fixed parameters, and symbolic parameters we are interested in designing to guarantee against hunting. The table also assumes relationships from basic operation of power systems that are drawn from \cite[Chapter 9.3]{Gonen_distr_text} and \cite{Guannan}. In the table, all fixed and symbolic variables except for $g$ are taken to be positive and real valued due to their physical meaning. 

\begin{table}[htbp]
\caption{\label{tab:base_parms} Notation and variable types }
\begin{center}
\begin{tabular}{|p{0.7cm}|p{1cm}|p{2.1cm}|p{0.7cm}|p{1.9cm}|}    \hline
\textbf{} & \textbf{$\quad$ type} & \textbf{description} & \textbf{default} & \textbf{relationship}\\ \hline
  $v^{ref}$ & fixed & voltage ref. (p.u.) & 1.0   & --\\ \hline
   $\varepsilon$ & fixed & half of deadband width & 0.1  &  -- \\ \hline
 $v^-$, $v^+$ & fixed & deadband boundary & 0.95, 1.05 & $v^-=v^{ref}-\varepsilon$, $v^+=v^{ref}+\varepsilon$ \\ \hline
 $\chi$ & fixed & line reactance to the substation (p.u.) & 0.1 & --\\ \hline
$\eta$ & fixed &  impedance damping factor & 0.9 &  $0<\eta<1$\\ \hline
 $v_{10}$ & state & node 2 initial voltage (p.u.) & 1.04 & -- \\ \hline
 $v_{20}$ & state & node 3 initial voltage (p.u.) & 0.94 & --\\ \hline
 $d_{inv}$ &  symbolic & inverter 1 and 2 delay (s) & 4 & $d_{inv}<d_{L1}$\\ \hline
 $d_{L1}$ & symbolic & LTC1 delay (s) & 30 & --\\ \hline
 $d_{L2}$ & symbolic & LTC2 delay (s) & 40 & $d_{L1}<d_{L2}$, $d_{L2}<2d_{L1}$\\ \hline
 $\bar{v}_L$ & symbolic & tap voltage (p.u.)& 0.03 & $\bar{v}_L<2\varepsilon$\\ \hline
 $g$ & symbolic & inverter 1 and 2 control gain & -- & --\\ \hline
\end{tabular}
\end{center}
\end{table}

\subsection{Conditions for two-LTC System}
\label{2ltc_section}

Let the $\Sigma_1$ be the subsystem where LTC1 and LTC2 operate on the circuit in Fig. \ref{LTC_inv_circuit} normally (see Assumption \ref{assump:ov_uv}). Both devices have the same deadband width $2\varepsilon$ that is centered on the same voltage reference $v^{ref}$. When the voltage is outside the deadband for $d_{L1}$ ($d_{L2})$ seconds, LTC1 (LTC2) taps, which updates both voltages according to $v[k+1]=v[k]\pm [\bar{v}_L ~\bar{v}_L]^\top$ where $v=[v_1 ~v_2]^\top \in \mathbb{R}^2$. Because all tap actions shift both voltages by $\bar{v}_L$ amount, tapping manifests as discrete jumps on the $(v_1,v_2)$ space with slope of $\pm 1$ between the initial and after-tap voltage. 

\begin{lemma}
    If $\bar{v}_L>2\varepsilon$, system $\Sigma_1$ will have marginally stable oscillations for all time when any $v_1[0] \in M(D,c)$ or $v_2[0] \in M(D,c)$ where $c=\bar{v}_L-2\varepsilon>0$. \label{lem_obv_2LTC}
\end{lemma}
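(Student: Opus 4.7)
The plan is to exploit the geometric mismatch between the tap step size $\bar{v}_L$ and the deadband width $2\varepsilon$: when $\bar{v}_L > 2\varepsilon$, any tap applied to a voltage sitting within the margin $M(D,c)$ overshoots the opposite deadband boundary by an amount in $[0,c]$, landing the voltage back in $M(D,c)$ on the other side. This self-similar mapping between opposite sides of the margin is the mechanism behind the periodic orbit.

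First, by symmetry in the two voltages and in the overvoltage/undervoltage directions, I would reduce to a single representative case, say $v_1[0] \in M(D,c)$ on the overvoltage side, and parametrize $v_1[0] = v^+ + \delta$ with $\delta \in (0,c]$. Since $v_1[0] > v^+$, LTC1's timer is running; since $d_{L1} < d_{L2}$ and no inverters act in $\Sigma_1$, LTC1 is the first device to fire, so after $d_{L1}$ seconds I get $v_1[1] = v^+ + \delta - \bar{v}_L$. Rewriting $v^+ - \bar{v}_L = v^- + 2\varepsilon - \bar{v}_L = v^- - c$ yields $v_1[1] = v^- - (c-\delta)$, which lies in $[v^- - c,\, v^-)$, i.e., in $M(D,c)$ on the undervoltage side. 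The same algebra run in reverse after another $d_{L1}$ seconds brings $v_1$ back to $v^+ + \delta$, producing a period-$2d_{L1}$ orbit for $v_1$.

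Second, I must rule out LTC2 breaking the orbit. Because each LTC1 tap shifts $v_2$ by $\pm \bar{v}_L$, the trajectory of $v_2$ alternates between $v_2[0]$ and $v_2[0] - \bar{v}_L$, spending exactly $d_{L1}$ seconds at each value. Whenever $v_2$ is outside $D$, LTC2's fixed-delay timer requires $d_{L2}$ continuous seconds of same-direction violation to trigger. Using $\bar{v}_L > 2\varepsilon$, I would show that the shift $\pm \bar{v}_L$ either puts $v_2$ back inside $D$ or flips its violation sign; in either case LTC2's timer is reset (or never starts accumulating). Combined with $d_{L2} > d_{L1}$, this guarantees LTC2 never reaches its threshold within any half-period, so LTC1 alone drives the oscillation indefinitely.

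Finally, the existence of this period-$2d_{L1}$ orbit in $(v_1,v_2)$-space, returning exactly to the initial state every cycle with bounded excursions of size at most $\bar{v}_L$, is precisely marginal stability of the closed trajectory for all time. The main obstacle I anticipate is making the LTC2 timer-reset argument fully rigorous across all possible initial positions of $v_2[0]$ (inside $D$, inside $M(D,c)$ on either side, or matched in sign with $v_1$), since different starting positions produce different patterns of boundary crossings for $v_2$; I would handle this by a short case split on the sign of $v_2[0] - v^{\text{ref}}$ and by invoking the same overshoot identity $v^\pm \mp \bar{v}_L = v^\mp \mp c$ used for $v_1$.
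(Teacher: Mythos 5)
Your proposal is correct and follows essentially the same route as the paper's proof: an LTC1 tap of size $\bar{v}_L>2\varepsilon$ overshoots the deadband and lands $v_1$ in the mirror position inside $M(D,c)$, the reverse tap after another delay returns the state to the initial condition, and the cycle repeats as a period-$2d_{L1}$ orbit. Your additional argument that LTC2's fixed-delay timer is always reset before reaching $d_{L2}$ (since $d_{L2}>d_{L1}$ and each LTC1 tap either returns $v_2$ to $D$ or flips its violation sign) is a detail the paper's sketch silently omits, and is a worthwhile addition rather than a different approach.
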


Distribution engineers know not to set $\bar{v}_L>2\varepsilon$ when choosing LTC settings, so next we focus on how oscillations could occur when $\bar{v}_L\leq 2\varepsilon$. 

We partition $W$ into four regions $D$, $W_g$, $W_b$, $W_o$, based on the possible trajectories from starting the system in each region. We define $W_o$ such that from there we only transition to the deadband or oscillate. For example, $v_1$ should satisfy $v_1-\bar{v}_L<v^+$. Therefore we define the boundary of $W_o$ in terms of the $v_i^*$ that satisfies $v_i^*-\bar{v}_L= v^+ ~\text{for}~ i=1,2$. This gives the regions 
\begin{subequations}
\begin{align}
\begin{split}
    W_o &= \{(v_1,v_2) \in W: (v_1 \in M(D,v_1^*-v^+) ~\text{and}~v_2\in D),\\
      &\qquad ~\text{or}~(v_2\in M(D,v_2^*-v^+) ~\text{and}~v_1\in D)\},
\end{split}\\
\begin{split}
    W_b &= \{(v_1,v_2) \in W: (v_1,v_2>v^+ ~\text{and}~ v_1+v_2\leq v^+ \\&+\bar{v}_L),
      ~\text{or}~  (v_1,v_2<v^- ~\text{and}~ v_1+v_2\geq v^- -\bar{v}_L)\},
\end{split}
\end{align}
\label{2ltc_region_defns}
\end{subequations}
and $W_g$ is what remains of $W$ ($W_g=W \setminus (D \cup W_o \cup W_g$)).
\begin{figure}[!h]
    \centering
    \begin{subfigure}{0.4\columnwidth}
        \begin{center}
	    \resizebox{0.99\columnwidth}{!}{%
		\begin{tikzpicture}
			\input{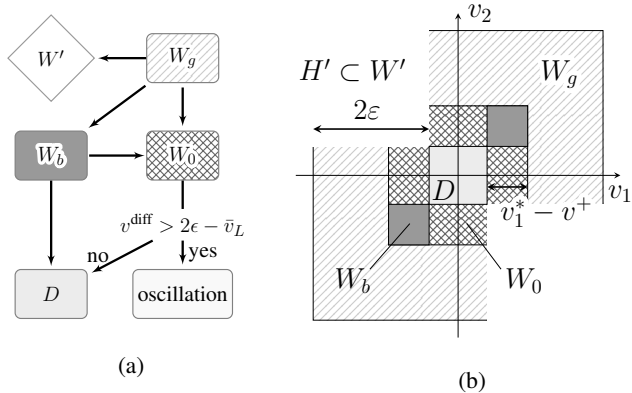}
		\end{tikzpicture}
		}
	    \end{center}
        \caption{}
    \end{subfigure}
    \hfill
    \begin{subfigure}{0.55\columnwidth}
        \begin{center}
	    \resizebox{0.99\columnwidth}{!}{%
		\begin{tikzpicture}
			\Large
\begin{axis}[
unit vector ratio*=1 1 1,,
 	xmin=-\axlim,
 	xmax=\axlim,
 	ymin=-\axlim,
 	ymax=\axlim,
 	axis lines=none]

	\draw[pattern={north east lines},pattern color=green] (-\epsi-3*\VL,-\epsi-3*\VL) rectangle (\epsi+3*\VL,\epsi+3*\VL) node[pos=.85] {\contour{white}{$W_g$}};
	\draw[thin, pattern={crosshatch},pattern color=orange] (-\epsi-0.01,-\epsi-0.01) rectangle (\epsi+0.01,\epsi+0.01);
	\draw[thin, fill=blue] (-\epsi-0.01,-\epsi-0.01) rectangle (-0.5*\epsi,-0.5*\epsi);
	\draw[thin, fill=blue] (0.5*\epsi,0.5*\epsi) rectangle (\epsi+0.01,\epsi+0.01);
	\draw[draw=none, fill=white] (-0.5*\epsi,0.5*\epsi) rectangle (-\epsi-3*\VL-0.01,\epsi+3*\VL+0.01) node[pos=.6] {$H' \subset W'$};
	\draw[draw=none, fill = white] (0.5*\epsi,-0.5*\epsi) rectangle (\epsi+3*\VL+0.01,-\epsi-3*\VL-0.01);
	\draw[thin, fill=gray!15] (-0.5*\epsi,-0.5*\epsi) rectangle (0.5*\epsi,0.5*\epsi) node[pos=.25] {$D$};
\end{axis}

\begin{axis}[
unit vector ratio*=1 1 1,,
 	axis x line=center,
 	axis y line=center,
 	xlabel={$v_1$},
 	ylabel={$v_2$},
 	xlabel style={below},
 	ylabel style={right},
 	xmin=-\axlim,
 	xmax=\axlim,
 	ymin=-\axlim,
 	ymax=\axlim,
 	ticks=none,
    ]

	\draw[thick, >=stealth, <->, ] (-\epsi-3*\VL,\axlim - 0.1) -- (-0.5*\epsi,\axlim - 0.1) node[pos=.5, above] {};
	\draw[thick, >=stealth, <->, ] (0.025,-0.01) -- (0.06,-0.01);
\end{axis}
\draw node at (1.25,4) {$2 \varepsilon$};
\draw node at (4,1) {$W_0$};
\draw node at (1,1) {$W_b$};
\draw node at (4.4,2.2) {\contour{white}{$v_1^* - v^+$}};
\draw[thin] (3.6,1.2) -- (3,2);
\draw[thin] (1.3,1.2) -- (2,2);
		\end{tikzpicture}
		}
	    \end{center}
        \caption{}
    \end{subfigure}
    \caption{Regions for two-LTC system $\Sigma_1$ when $\bar{v}_L<2\varepsilon$.}
    \label{2ltc_regions} %
\end{figure}

Fig. \ref{2ltc_regions} shows these regions in the state space as well as the possible transitions between regions $W_g, W_o, W_b, D$, and $W^\prime$. 
For $W_g$ possibilities, LTC jumps having a slope of $\pm 1$ implies that the only way $W_g \rightarrow W'$ is by $W_g \rightarrow H' \subset W'$. Because $W_g$ is $\bar{v}_L$ away from $D$, $W_g \nrightarrow D$. %
For $W_b$ possibilities, LTC jumps having a slope of $\pm 1$ implies that $W_b$ cannot transition to $W^\prime$ nor $W_g$. %
For $W_o$ possibilities, LTC jumps having a slope of $\pm 1$ implies that the only way $W_o \rightarrow W'$ is by $W_o \rightarrow H' \subset W'$. By the same logic, $W_o$ cannot transition to $W_b$ nor $W_g$. %
Finally, from the geometry of Fig. \ref{2ltc_regions}, observe that if $v^{\text{diff}}[0]<2\varepsilon$, $W_o \nrightarrow W^\prime$.

\begin{lemma} \label{lem_2LTC}
    If $\bar{v}_L\leq 2 \varepsilon$, $v[T] \in W_o$, and $v^{\text{diff}}<2\varepsilon-\bar{v}_L$, system $\Sigma_1$ will have marginally stable oscillations starting at time $T$.
\end{lemma}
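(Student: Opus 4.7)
The key observation to exploit is that every LTC tap updates $v[k+1] = v[k] \pm [\bar v_L,\bar v_L]^\top$, so both voltages shift by the same amount and the signed gap $v^{\text{diff}} = v_1 - v_2$ is an invariant of the dynamics of $\Sigma_1$. In the $(v_1,v_2)$ plane the entire trajectory therefore lies on the $45^\circ$ line through the initial state $v[T]$, and the hypothesis on $v^{\text{diff}}$ holds for every $k \geq T$. The transition analysis preceding the lemma then already guarantees that the trajectory cannot leave $W_o$ for $W_g$, $W_b$, or $W^\prime$ (the $W^\prime$ exit is ruled out because the hypothesis implies $|v^{\text{diff}}| < 2\varepsilon$), so it suffices to rule out a transition to $D$ and to exhibit a periodic orbit inside $W_o$.

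The plan is to follow one full tap cycle starting in $W_o$ and verify that the system returns exactly to its initial state. By the symmetry in the definition of $W_o$, it suffices to analyze one representative configuration, say $v_1[T] \in (v^+, v^+ + \bar v_L]$ with $v_2[T] \in D$; the three other subcases follow by flipping signs or swapping the roles of $v_1$ and $v_2$. Since $v_2[T] \in D$, only LTC1's timer is accumulating, and after $d_{L1}$ seconds LTC1 taps down. The bound $v_1[T] \leq v^+ + \bar v_L$ combined with $\bar v_L \leq 2\varepsilon$ places $v_1[T+d_{L1}]$ inside $(v^-,v^+]$, i.e., inside $D$. Substituting $v_2 = v_1 - v^{\text{diff}}$ in the tap update then shows that the hypothesis on $v^{\text{diff}}$ is precisely the threshold condition for $v_2[T+d_{L1}] = v_2[T] - \bar v_L$ to drop into the undervoltage strip $[v^- - \bar v_L, v^-)$, so the new state lies in the opposite subcase of $W_o$ with $v_1 \in D$ and $v_2$ undervoltage.

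From this intermediate state LTC1 is idle (its timer is reset because $v_1 \in D$), while LTC2's timer has just started. After $d_{L2}$ further seconds LTC2 taps upward. The invariance of $v^{\text{diff}}$ together with the exact reversal of the two opposite tap operations brings the state back to $v[T]$ exactly, yielding a periodic orbit of period $d_{L1}+d_{L2}$ that never enters $D$. This is the claimed marginally stable oscillation.

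The main obstacle I anticipate is the timer bookkeeping. One must verify that LTC1 does not rearm and preempt LTC2 during the first half of each cycle, which follows immediately from $v_1 \in D$ resetting LTC1's delay counter, and that LTC2 does not preempt LTC1 during the second half, which follows from $d_{L1} < d_{L2}$ in Table \ref{tab:base_parms}. Once these facts are assembled for the representative case, the remaining three subcases of $W_o$ close up into the same periodic orbit by the sign-flip and index-swap symmetries, completing the argument.
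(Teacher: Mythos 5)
Your overall route is the same as the paper's: identify the overshoot condition created by a single tap, translate it into a threshold on $v^{\text{diff}}$, and use the fact that the two opposite taps exactly cancel (equivalently, that $v^{\text{diff}}$ is invariant under tapping) to conclude the orbit is periodic and hence marginally stable. You actually carry the cycle and timer bookkeeping further than the paper, whose proof only writes down the two boundary inequalities and combines them.

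There is, however, one step that fails as written. You assert that the hypothesis $v^{\text{diff}}<2\varepsilon-\bar{v}_L$ ``is precisely the threshold condition'' for $v_2[T+d_{L1}]=v_2[T]-\bar{v}_L$ to fall below $v^-$. Carrying out the substitution in your representative case ($v_1[T]>v^+$, $v_2[T]\in D$) gives the reverse inequality: the tap overshoots iff $v_2[T]<v^-+\bar{v}_L$, which together with $v_1[T]>v^+$ yields $v^{\text{diff}}=v_1[T]-v_2[T]>v^+-(v^-+\bar{v}_L)=2\varepsilon-\bar{v}_L$. Under the literal hypothesis the tap does \emph{not} overshoot and the trajectory lands in $D$ --- exactly the ``no'' branch of the decision node in Fig.~\ref{2ltc_regions}, whose label reads $v^{\text{diff}}>2\varepsilon-\bar{v}_L$ for the oscillation branch, and exactly the inequality the paper's own proof derives. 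So the lemma statement's inequality appears to be reversed relative to its proof, and your write-up inherits the discrepancy instead of exposing it; a correct proof must either derive the $>$ direction or flag the typo. Separately, your symmetry claim needs care: the four subcases of $W_o$ do not all reduce to the same signed inequality (for instance $v_1<v^-$ with $v_2\in D$ yields $v^{\text{diff}}<-(2\varepsilon-\bar{v}_L)$), so the clean condition is $|v^{\text{diff}}|>2\varepsilon-\bar{v}_L$ rather than a bare signed threshold. With the direction fixed, the rest of your argument --- invariance of $v^{\text{diff}}$, the exact reversal of the up/down taps, and the observation that only one timer accumulates at a time so no preemption occurs --- goes through and is, if anything, more complete than the published proof.
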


A simulation of marginally stable oscillations due to Lemma \ref{lem_2LTC} hunting is in Fig. \ref{marg_oscill_stap}.
\begin{figure}[!h]  
   \centering 
   \includegraphics[width=\columnwidth]{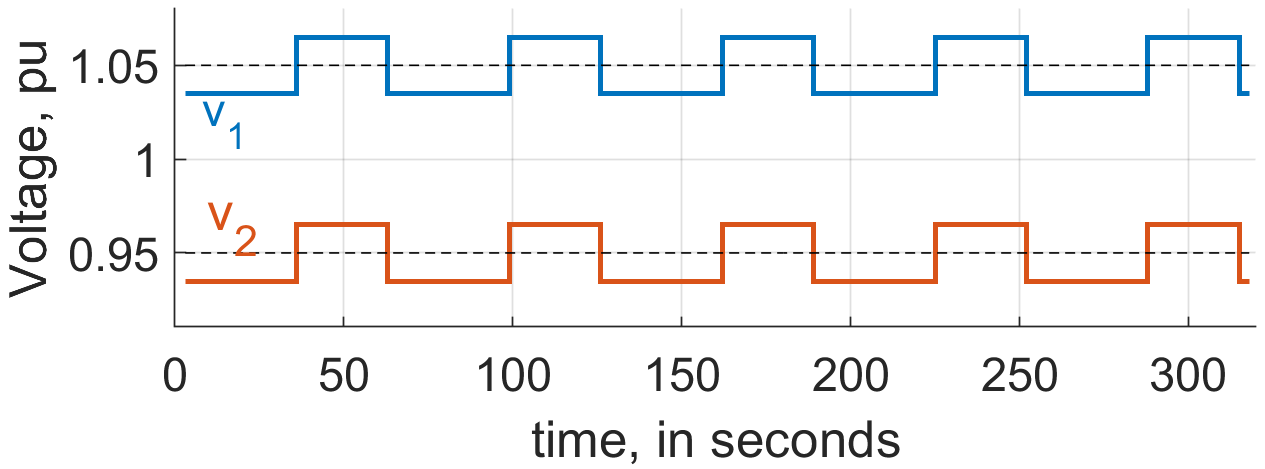}
    \caption{Simulation of system $\Sigma_1$ when Lemma \ref{lem_2LTC} holds. Parameters are the defaults listed in Table \ref{tab:base_parms}.}
    \label{marg_oscill_stap} %
\end{figure}
\subsection{Conditions for 2-Inverter System}
Let the $\Sigma_2$ be the subsystem where the two inverters operate on the circuit in Fig. \ref{LTC_inv_circuit} normally (see Assumption \ref{assump:ov_uv}).
For now we omit the deadband in the control logic. We employ a discrete integrator (also called incremental volt-var control in \cite{Farivar_incremental,Guannan}) for computing inverter reactive power set-points $q^{inv}$ with
\begin{equation} 
    q^{inv}[k+1]=q^{inv}[k]-G(v[k]-v^{ref})\label{ulaw}, %
\end{equation}
where $v^{ref}\in \mathbb{R}^{n}$ is the reference voltage, and $G$ is a diagonal matrix containing controller gains. Because only a subset of the network nodes are controlled, $v^{ref}$ can be assigned to the nominal vector of ones. $G$ being diagonal enforces that each inverter is injecting power to regulate the voltage at its own node. For any radial circuit with $n$ inverters at different nodes, $v \in \mathbb{R}^{n}$ and $q^{inv} \in \mathbb{R}^{n}$.

The algebraic power flow equations that map inverter power injections to voltages can be represented by
\begin{equation}
   v[k+1]=v[k]+X(q^{inv}[k+1]-q^{inv}[k]) \label{Veqn} %
\end{equation}
from \cite[equation 8]{Helou}. Matrix element $X_{ij}$ is the common ancestor path reactance between node $i$ and node $j$ on the network. Next we substitute \eqref{ulaw} into \eqref{Veqn}, giving
\begin{equation}
   v[k+1]=v[k]-X G(v[k]-v^{ref}) \label{v_update}.
\end{equation}
Finally, we subtract $v^{ref}$ from both sides and define $e[k] \coloneqq v[k]-v^{ref}$, giving \cite[equation 12]{Helou}:
\begin{equation}
   e[k+1]=(I-XG)e[k]. \label{QV_int_closedloop} %
\end{equation}

Author \cite{Helou} proves in their Theorem 3.1 that for the system \eqref{QV_int_closedloop}, $v \rightarrow v^{ref}$ iff
\begin{equation}
    0 \prec G \prec 2X^{-1}. \label{helou_converge}
\end{equation}

In the scalar case (one inverter acting on a single phase circuit), this condition is $0<g<2/\chi$. This implies that under normal operation, $g$ should be positive but not too large to have the voltages converge. We are also interested in the possibility of $g<0$, where the inverters push the voltage away from $v^{ref}$, because that case is more dangerous.
 
\begin{lemma}
    If $G \prec0$, system $\Sigma_2$ given by \eqref{QV_int_closedloop} has $v \rightarrow \pm \infty$. \label{QV_int_diverge}
\end{lemma}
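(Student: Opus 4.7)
The plan is to show that $G\prec 0$ forces every eigenvalue of the closed-loop matrix $I-XG$ to lie strictly outside the unit disk, so that iterating $e[k+1]=(I-XG)e[k]$ blows up along any direction in which the initial error has a nonzero component.

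First, I would invoke the structure of the reactance matrix: for a radial circuit, $X$ in \eqref{Veqn} is symmetric positive definite (this is standard in the LinDistFlow/DistFlow framework underlying \cite{Helou}, and is implicit in the bound \eqref{helou_converge}). Combined with the hypothesis $G\prec 0$, we have $-G\succ 0$ and $X\succ 0$.

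Second, I would argue that $XG$ has only real, strictly negative eigenvalues. Since $X\succ 0$ admits a symmetric square root $X^{1/2}\succ 0$, we can write
\begin{equation}
    -XG \;=\; X^{1/2}\bigl(X^{1/2}(-G)X^{1/2}\bigr)X^{-1/2},
\end{equation}
so $-XG$ is similar to the symmetric positive-definite matrix $X^{1/2}(-G)X^{1/2}$. Hence every eigenvalue of $-XG$ is real and positive, every eigenvalue of $XG$ is real and negative, and every eigenvalue of $I-XG$ is real and strictly greater than $1$.

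Third, I would close the argument by solving the recursion: $e[k]=(I-XG)^k e[0]$, and decomposing $e[0]$ in a (generalized) eigenbasis of $I-XG$. Provided $e[0]\neq 0$, at least one component grows geometrically at a rate strictly greater than $1$, so $\|e[k]\|\to\infty$, i.e.\ $v[k]\to\pm\infty$ componentwise, with the sign of divergence in each coordinate fixed by the sign of the corresponding component of $e[0]$ in the eigenbasis.

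The only delicate step is the spectral claim about $XG$: because $X$ is generally not diagonal (there is mutual path-reactance between nodes) one cannot simply read off its eigenvalues from those of $G$, and the similarity transformation above is what makes the argument work. Everything else is immediate from the scalar linear-recursion picture.
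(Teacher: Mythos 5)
Your proof is correct and follows essentially the same route as the paper's: both reduce the claim to showing that $-XG$ is similar (via a square-root congruence) to a symmetric positive definite matrix, so every eigenvalue of $I-XG$ is real and strictly greater than $1$, forcing $e[k]=(I-XG)^k e[0]$ to diverge. The only difference is that you derive the spectral fact directly where the paper cites the proof of Theorem 3.1 in \cite{Helou}, and you are slightly more careful in noting that divergence requires $e[0]\neq 0$ and that diagonalizability justifies the eigenbasis decomposition.
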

When the deadband is introduced, inverters only operate according to \eqref{ulaw} when their voltage is outside the deadband. Because $v^{ref}$ is centered in the deadband, Lemma \ref{QV_int_diverge} holds in the same way as the no-deadband case, and the convergence condition \eqref{helou_converge} yields $v \rightarrow D$ instead of $v \rightarrow v^{ref}$.

 For our system $\Sigma_2$, $G=diag([g ~g])$ and $q=[q_1~ q_2]$. The dissipative nature of power grids due to line impedances causes the diagonal terms of $X$ to be larger than the off-diagonal terms. From the circuit in Fig. \ref{LTC_inv_circuit}, system $\Sigma_2$ has
\begin{align}
    X=\begin{bmatrix}
        \chi & \eta\chi\\
        \eta\chi & \chi
    \end{bmatrix}, \label{damp_eqn}
\end{align}
where the damping factor $\eta$ satisfies $0<\eta<1$. In general, if inverters have different reactances in the line path to the substation, $\eta_1=X_{21}/X_{11}$ and $\eta_2=X_{12}/X_{22}$. Because in this work we only use $\eta$ for its property of $0<\eta<1$, using $\eta=\eta_1=\eta_2$ does not change the results. 
\section{Full System Analysis}
\label{probform2_section}

\subsection{Modeling the Four Devices as a Hybrid System}
\label{hybrid_sys_section}
 Next we model all devices in Fig. \ref{LTC_inv_circuit} operating normally (see Assumption \ref{assump:ov_uv}) as a discrete hybrid automaton, which is the interconnection of a finite state machine with a switched affine system.
 This system, $\Sigma^3$, has state vector $x=\begin{bmatrix}z_1 & z_2 & z_3 & v_1 & v_2\end{bmatrix}^\top$, where $z_1,z_2,z_3$ are the internal timers for LTC1, LTC2, and inverter1 and 2, respectively. The two inverters use the same timer $z_3$ because they have the same delay of $d_{inv}$. Let $T_s$ be the timestep of the discrete model. Each mode has a label with the format $m^{\star}0$ where $\star=1,2,...8$, and has affine dynamics of the form $x[k+1]=A x[k]+c$ where $A \in \mathbb{R}^{5 \times 5}$ and $c \in \mathbb{R}^{5}$. 
To define the switching conditions, we define a function $f$ for whether a voltage is inside the deadband:
\begin{equation}
    f_i \coloneqq max(v_i-v^+,0)-max(v^{-}-v_i,0) ~\text{for}~i=1,2. %
\end{equation} 
 For $i=1,2$, if $v_i$ is an overvoltage then $f_i>0$. If it is an undervoltage then $f_i<0$, and if inside $D$ then $f_i=0$. Inverter1 (LTC1) responds when $f_1 \neq 0$ for $d_{inv}$ ($d_{L1}$) seconds, and Inverter2 (LTC2) responds when $f_2 \neq 0$ for $d_{inv}$ ($d_{L2}$) seconds.

Now we introduce the hybrid model, where we make the symbolic variables bold:
\begin{align}
    \shortintertext{m10: tap LTC1 up} %
    \shortintertext{Switch condition: $~z_1>\mathbf{d_{L1}} ~\text{and}~ (f_1+f_2<0)$}
    \shortintertext{Dynamics:}
     &   \begin{bmatrix} z_1 \\z_2 \\z_3 \\ v_1 \\ v_2 \end{bmatrix}_{k+1} = \begin{bmatrix}
            0 & 0 & 0 & 0 & 0\\
            0 & 1 & 0 & 0 & 0\\
            0 & 0 & 1 & 0 & 0\\
            0 & 0 & 0 & 1 & 0\\
            0 & 0 & 0 & 0 & 1\\
        \end{bmatrix}
        \begin{bmatrix} z_1 \\z_2 \\z_3 \\ v_1 \\ v_2 \end{bmatrix}_{k}+\begin{bmatrix} T_s \\ T_s \\ T_s \\ \mathbf{ \bar{v}_L} \\ \mathbf{\bar{v}_L} \end{bmatrix}\nonumber
    \\\cline{1-2}
    \shortintertext{m20 (tap LTC1 down) is the same as m10 except the condition has $(f_1+f_2)>0$ and the affine term is $[T_s~T_s~T_s ~ \mathbf{-\bar{v}_L} ~ \mathbf{-\bar{v}_L}]^\top$.}  
    \cline{1-2}
    \shortintertext{m30: tap LTC2 up} %
    \shortintertext{Switch condition: $~z_2>\mathbf{d_{L2}} ~\text{and}~ (f_1+f_2<0)$}
    \shortintertext{Dynamics:}
     &   \begin{bmatrix} z_1 \\z_2 \\z_3 \\ v_1 \\ v_2 \end{bmatrix}_{k+1} = \begin{bmatrix}
            1 & 0 & 0 & 0 & 0\\
            0 & 0 & 0 & 0 & 0\\
            0 & 0 & 1 & 0 & 0\\
            0 & 0 & 0 & 1 & 0\\
            0 & 0 & 0 & 0 & 1\\
        \end{bmatrix}
        \begin{bmatrix} z_1 \\z_2 \\z_3 \\ v_1 \\ v_2 \end{bmatrix}_{k}+\begin{bmatrix}T_s \\ T_s \\ T_s \\ \mathbf{ \bar{v}_L} \\ \mathbf{\bar{v}_L} \end{bmatrix} \nonumber
        \\\cline{1-2}
    \shortintertext{m40 (tap LTC2 down) is the same as m30 except the condition has $(f_1+f_2)>0$ and the affine term is $[T_s~T_s~T_s ~ \mathbf{-\bar{v}_L} ~ \mathbf{-\bar{v}_L}]^\top$.} 
    \cline{1-2}
    \shortintertext{m50: inverter(s) respond to voltage issues}
    \shortintertext{Switch condition: $~z_3>\mathbf{d_{inv}}~\text{and}~ (f_1 \neq0 ~\text{or}~ f_2 \neq0$)}
    \shortintertext{Dynamics:}
     &\begin{bmatrix}
              z_1\\z_2\\z_3\\ v_1 \\ v_2
        \end{bmatrix}_{k+1}=
        \xbracketMatrixstack{
            1 & 0 & 0 & 0 & 0\\
            0 & 1 & 0 & 0 & 0\\
            0 & 0 & 0 & 0 & 0\\
            0& 0& 0 & 1-\chi\mathbf{g} & -\eta\chi\mathbf{g}\\
             0& 0& 0& -\eta\chi\mathbf{g} & 1-\chi\mathbf{g}
        }
        \begin{bmatrix}
            z_1\\z_2\\z_3\\ v_1 \\ v_2
        \end{bmatrix}_k+\nonumber
    \shortintertext{$\begin{bmatrix}
            T_s&T_s&T_s& \chi\mathbf{g}v^{ref}+s\chi\mathbf{g}v^{ref} & s\chi\mathbf{g}v^{ref}+\chi\mathbf{g}v^{ref}
        \end{bmatrix}^\top$ \label{eq:mode60}}
    \cline{1-2}
    \shortintertext{m60: reset LTC2 and inverter timers}
    \shortintertext{Switch condition: [$m=m10$ and $(f_1>0$ or $f_2>0)$] $\quad\quad$ or [$m=m20$ and $(f_1<0$ or $f_2<0)$]}
    \shortintertext{Dynamics: $z_2[k+1]=T_s,~ z_3[k+1]=T_s, \quad v[k+1]=v[k]$}\cline{1-2}
     \shortintertext{m70: reset LTC1 and inverter timers}
    \shortintertext{Switch condition: [$m=m30$ and $(f_1>0$ or $f_2>0)$] $\quad\quad$ or [$m=m40$ and $(f_1<0$ or $f_2<0)$]}
    \shortintertext{Dynamics: $z_1[k+1]=T_s,~ z_3[k+1]=T_s, \quad v[k+1]=v[k]$}\cline{1-2}
    \shortintertext{m80: increment timers}
    \shortintertext{Switch condition: no other mode conditions hold}
    \shortintertext{Dynamics: $z_i[k+1]=z_i[k]+T_s ~\forall i=1,2,3, ~v[k+1]=v[k]$} \nonumber
\end{align}
\begin{remark} \label{remark:delay}
By the relationship $d_{inv} < d_{L1} < 2 d_{L2}$ from Table \ref{tab:base_parms}, the inverters respond to voltage issues before LTC1. This relationship is a less conservative version of $d_{inv} \ll d_{L1}$, which is often made for power systems \cite{Ahmed}. Therefore, an LTC will never tap twice before an inverter acts. 
\end{remark}
\subsection{Behavior when $g>0$}
For positive values of the inverter gain $g$, our control action renders the voltage dynamics stable, as shown next.  

\begin{proposition}
    If system $\Sigma_3$ has $0< g < \frac{2}{\chi}$, there exists $T>0$ such that $v[k] \in D$ for all $k \geq T$. \label{lem_gpos}
\end{proposition}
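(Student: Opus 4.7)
The plan is to combine the convergence result for the pure-inverter subsystem $\Sigma_2$ from condition \eqref{helou_converge} with the timer ordering stated in Remark~\ref{remark:delay} to show that $v[k]$ enters $D$ in finite time and never leaves. Two observations make this tractable. First, $D$ is invariant under the full hybrid dynamics: when $v[k] \in D$ we have $f_1[k] = f_2[k] = 0$, so none of the switch conditions of modes m10--m50 are satisfied, and the reset modes m60, m70 fire only immediately after a tap. Hence mode m80 is selected, whose dynamics leave $v$ unchanged, and by induction $v[j] \in D$ for all $j \geq k$. So the task reduces to showing $v[k] \in D$ for some finite $k$.

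The core step is to argue that the error $e[k] \coloneqq v[k] - v^{ref}$ shrinks geometrically along the trajectory, up to bounded perturbations from LTC taps. Whenever mode m50 fires, \eqref{QV_int_closedloop} gives $e[k+1] = (I - XG)e[k]$. For $G = gI$ with $0 < g < 2/\chi$, the scalar argument following \eqref{helou_converge} in the excerpt (together with the structure of $X$ in \eqref{damp_eqn}) yields $\rho(I - XG) < 1$, so each inverter action strictly contracts $\lVert e \rVert$. An LTC tap in modes m10--m40 displaces $e$ by $\pm \bar{v}_L [1,1]^\top$, but in the normal operating region $W$ the hourglass constraint $H$ from \eqref{H} forces both components of $f$ to share a sign at the moment of tapping, so the tap direction selected by the switch conditions $f_1 + f_2 \lessgtr 0$ moves $v$ toward $v^{ref}$ rather than away from it. Thus each tap decreases $\lVert e \rVert$ by at least $\bar{v}_L / \sqrt{2}$ or leaves it unchanged in the relevant coordinate.

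I would then stitch the two action types together using Remark~\ref{remark:delay}. Since $d_{inv} < d_{L1} < 2 d_{L2}$, between any two LTC taps at least one inverter action occurs, and between any LTC tap and the next, the inverter timer $z_3$ is reset to $T_s$ (via m60 or m70) and then reaches $d_{inv}$ again before either LTC timer reaches its threshold. Consequently the trajectory admits a subsequence of inverter actions with uniformly bounded gaps, each contracting $\lVert e \rVert$ by a factor $\rho(I-XG) < 1$, while the interleaved LTC taps do not expand $\lVert e \rVert$. This forces $\lVert e[k] \rVert \to 0$, so eventually $\lVert e[k] \rVert_\infty < \varepsilon$, i.e.\ $v[k] \in D$, and invariance of $D$ completes the proof.

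The main obstacle I anticipate is the bookkeeping around the LTC-tap direction when only one of the two voltages is outside $D$: here the inverter action, through the off-diagonal entry $-\eta \chi g$ of the m50 matrix, perturbs the voltage that was already inside $D$ and may push it out, while the LTC switch condition still depends on the combined sign $f_1 + f_2$. Showing that these cross-coupling effects cannot sustain an LTC ping-pong outside $D$ requires careful use of the dissipativity $0 < \eta < 1$ in \eqref{damp_eqn} and the bound $\bar v_L < 2\varepsilon$ from Table~\ref{tab:base_parms}; this is where most of the technical work lies.
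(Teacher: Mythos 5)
Your overall strategy---invariance of $D$, a Lyapunov-type contraction argument driven by the inverter action under \eqref{helou_converge}, and the delay ordering of Remark~\ref{remark:delay} to guarantee the inverter acts between consecutive taps---is the same route the paper takes (the paper's own proof is only a sketch of exactly this argument). However, there is a genuine gap in your middle step: the claim that each LTC tap ``decreases $\lVert e \rVert$ by at least $\bar v_L/\sqrt{2}$ or leaves it unchanged in the relevant coordinate'' is false, and it is false for precisely the reason this paper exists. A tap shifts \emph{both} voltages by $\pm\bar v_L$ based on the aggregate sign of $f_1+f_2$; when $v_2$ is an undervoltage and $v_1$ sits just below $v^+$, the upward tap corrects $v_2$ but overshoots $v_1$ out of the deadband, strictly increasing $\lVert e\rVert_\infty$. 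This overshoot is exactly the hunting mechanism of Lemma~\ref{lem_2LTC} and Fig.~\ref{marg_oscill_stap}: with $g=0$ the tap-only system $\Sigma_1$ produces \emph{marginally stable oscillations}, which would be impossible if taps were nonexpansive in $\lVert e\rVert$. So your ``interleaved LTC taps do not expand $\lVert e\rVert$'' premise cannot be used to conclude geometric decay of $\lVert e[k]\rVert$ along the whole trajectory.

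The paper's sketch avoids this by weakening the claim about taps: it only asserts that when $g=0$ the energy $Y_k$ remains \emph{bounded between two fixed positive constants} over the marginally stable cycle (not monotonically nonincreasing tap-by-tap), and then attributes all strict dissipation to the m50 inverter action, which by Remark~\ref{remark:delay} fires at least once between taps. To repair your argument you would need to replace per-tap nonexpansiveness with a per-cycle statement---e.g., that over one complete tap-and-return sequence the tap contributions to $\lVert e\rVert$ cancel (as in the marginally stable $g=0$ case), so the net change over a cycle is the strictly negative inverter contribution. You correctly flag the single-voltage-out-of-band cross-coupling as the hard case in your final paragraph, but the difficulty is not merely bookkeeping around the coupling term $-\eta\chi g$: it is that your monotonicity claim for taps must be abandoned, not refined.
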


\begin{proof}

    We give an intuitive sketch of the proof through a Lyapunov argument. We notice that when $g=0$ circuit $\Sigma_3$ behaves equivalently to $\Sigma_1$. Lemma \ref{lem_2LTC} establishes that if $\bar{v}_L<2 \varepsilon$, then the system exhibits marginally stable oscillations (recall Fig. \ref{marg_oscill_stap}). This implies that the energy of the system 
         \newpage %

    $Y_k$ remains positive and bounded by fixed values $\bar Y_1$ and $\bar Y_2$, i.e., $0<\bar Y_1 \leq Y_k \leq \bar Y_2 < \infty$, when $g = 0$.

    Now if we consider a positive gain $0<g<2/\chi$ in $\Sigma_3$, equation \eqref{helou_converge} guarantees that the inverter acts as a stabilizing controller during mode m50, decreasing the system energy and driving $v$ closer to the deadband.
    Finally, from Remark \ref{remark:delay} we know that because of the conditions on the timers, 
    we will always return to m50 regularly, at least once between each LTC tap.
    Therefore, the system $\Sigma_3$ has maximum energy of $\bar Y_2$ and will only lose energy until landing in the deadband. 
\end{proof}

The consequence of Lemma \ref{lem_gpos} is that for realistic values of the network with $g >0$, the system exhibits damped oscillations which eventually reach the deadband. Thus, we focus the remainder of this paper on the more dangerous scenario where $g<0$ renders the oscillations unstable. 
\subsection{Trajectory Walkthrough for $g<0$}
\begin{figure}[!h]  
   \centering 
   \includegraphics[width=.5\textwidth]{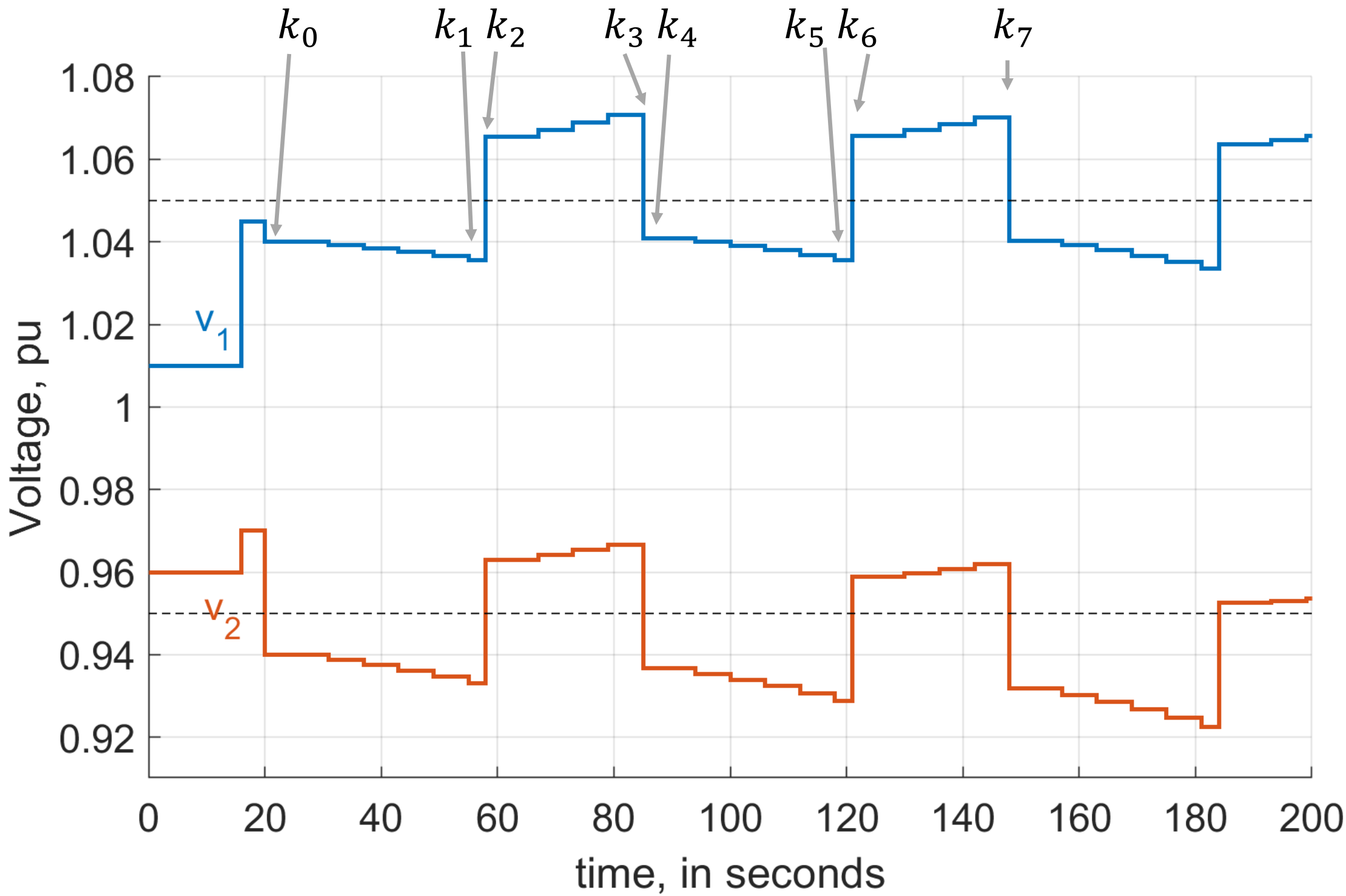}
    \caption{Simulation of system $\Sigma_3$ for $g<0$ with parameters being the defaults in Table \ref{tab:base_parms}. System start time is at $k_0$.}
    \label{fig_walkthrough} 
\end{figure}

Here, we describe a scenario on system $\Sigma_3$ that can result in the beginning of unstable oscillations. Suppose an adversary gains access to and installs malware in the communication system that remotely sends parameters to the inverters. The malware logic could, for example, send a negated inverter controller gain $g$ when any of the inverter voltages suddenly change substantially ($|v[k+1]-v[k]|>0.6 \varepsilon$), and the voltage lands far from $v^{ref}$ ($|v[k+1]-v^{ref}|>0.6 \varepsilon$). Such an attack is dangerous because the trigger by external voltage event(s) conceals the time when the cyber \emph{breach} occurred, which could be much earlier than the parameter negation.

Follow along with Fig. \ref{fig_walkthrough}. Suppose shortly before time $k_0$ some external event(s), such as a fault or effect from the transmission grid, causes the voltages to shift abruptly. Suppose the voltage shifts increase $v^{\text{diff}}$, make $v_2$ an undervoltage, and triggers the negation of the inverter parameter $g$. See \cite{Birks_repo} for power flow examples of this voltage shift on realistic circuits. We are interested in what happens after the system $\Sigma_3$ start time of $k_0$. From $k_0$ to $k_1$: at $k_0$ we have $f_2<0$ while $f_1=0$, so 
inverter2 actuates several times, each time causing $v_2$ to get further from the deadband. For $k_1$ to $k_2$: the LTC2 delay is complete so LTC2 taps, fixing the $v_2$ undervoltage but making $v_1$ an overvoltage. For $k_2$ to $k_3$: 
the inverter1 actuates several times, each time causing $v_1$ to get further from the deadband. For $k_3$ to $k_4$: the LTC1 delay is complete so LTC1 taps, fixing the $v_1$ overvoltage but making $v_2$ an undervoltage again. At $k_4$ we have $f_2<0$ while $f_1=0$ which was the case for $k_0$, so we have completed one quasi-periodic oscillation. In future sections we show how these oscillations continue. 

\subsection{Single Inverter Action Preliminaries}
In this section we present some properties of inverter actions between LTC taps to prepare for later proofs. %

 Suppose inverter $i$ is acting during a time interval $[k ~...~ k+N]$. Define the change in $v_j \in \mathbb{R}^1$ due to the inverter's action as $\Delta v^{inv}_j(X_{ij},k+N,k,v_i[k]) \coloneqq v_j[k+N]-v_j[k] \in \mathbb{R}^1$. $\Delta v_j^{inv}$ can be thought of as the projection of inverter's actuation onto the $v_j$ axis of the $(v_i,v_j)$ space. Occasionally we omit some of the four parameters from $\Delta v^{inv}$ when they are not relevant.

\begin{assumption}(Bounds on LTC and inverter action)
    \\$\bar{v}_L+\Delta v^{inv}_i(X_{ii},k+N,k,v_i[k]) < 2\varepsilon ~\text{for}~ v_i[k] \in W\text{, and} ~\forall k,N$. %
    \label{steepest}
\end{assumption}
This assumption implies that $\bar{v}_L<2\varepsilon$ and $\Delta v^{inv}_i(X_{ii},k+N,k,v_i[k]) < 2\varepsilon ~\text{for}~ v_i[k] \in W\text{, and} ~\forall k,N$. The Assumption is reasonable because $\bar{v}_L$ is typically significantly less than $\varepsilon$, and $\Delta v^{inv}$ imparting a voltage change of close to $2\varepsilon$ would require unreasonably large combinations of circuit impedance and inverter capacity.

Recall the voltage update equation \eqref{v_update} $v[k+1]=v[k]-X G(v[k]-v^{ref})$. 
The inverter acts $N_1 \coloneqq \text{floor}(d_{L1}/d_{inv})$ times if after the interval $[k ~...~ k+N_1]$ the LTC1 taps, or $N_2 \coloneqq \text{floor}(d_{L2}/d_{inv})$ times if after $[k ~...~ k+N_2]$ LTC2 taps.
The $i^{th}$ row of \eqref{v_update} implies that node $i$'s voltage is
\begin{equation}
        v[k+1]=v[k]+\chi(v[k]-v^{ref}) ~\in \mathbb{R}^1, \label{v_update_scalar1}
\end{equation}
and each other voltage on the network is given by
\begin{equation}
       v[k+1]=v[k]+\eta\chi(v[k]-v^{ref}) ~\in \mathbb{R}^1 \label{v_update_scalar2}
\end{equation}
from substituting \eqref{damp_eqn} into \eqref{v_update}. Now consider the accumulation of \eqref{v_update} for $N=N_1$ or $N=N_2$ timesteps:
\begin{equation}
    v[N+k]=(I-X G)^N v[k]+\sum_{r=0}^{N-1} (1-XG)^r(X G v^{ref}). \label{affine_soln1}
  \end{equation} 
The $j^{th}$ row of \eqref{affine_soln1} gives a parameterization of $\Delta v_j^{inv}$:
\begin{multline}
    \Delta v^{inv}_j(X_{ij},k+N,k,v_i[k],g) = v_j[N+k]-v_j[k] \\=((1-\chi g)^N-1)v[k]+\sum_{r=0}^{N-1} (1-\chi g)^r(X_{ij} g v^{ref}). \label{parametrized}
\end{multline}

\begin{remark}(Remarks about $\Delta v^{inv}_i ~\forall~ i=1,2)$ \label{remark_deltaV_sign}
\begin{enumerate}
    \item Because $\Delta v_i^{inv}<2\varepsilon$ from Assumption \ref{steepest}, $\Delta v_i^{inv}(X_{ii},k+1,k)$ has the same sign for $k=0...N$.
   \item If $g>0$, $sign \{ \Delta v_i^{inv}(v[k])\} = -sign \{ v_i[k]-v^{ref}\}$. %
   \item If $g<0$, $sign \{ \Delta v_i^{inv}(v[k])\} = sign \{ v_i[k]-v^{ref}\}$. %
\end{enumerate}
\end{remark}
The first item establishes that each inverter actuates in the same direction between LTC taps. The second (third) items establish that when $g>0$ ($g<0$), the inverters push the voltage toward (away) from the deadband.

\begin{lemma} ($\Delta v^{inv}$ is a homogeneous function)
    The coupling effect of a single inverter actuating at node $i$ on the voltage at node $j$ is damped by a factor of $\eta$. That is, 
    $\Delta v_j^{inv}(X_{ij},k+N,k,v_i[k])=\eta \Delta v_i^{inv}(X_{ii},k+N,k,v_i[k])$.
    \label{homogeneous}
\end{lemma}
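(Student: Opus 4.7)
The plan is to bypass the closed-form parametrization \eqref{parametrized} and instead work directly from the one-step update, exploiting the fact that during the interval $[k, \ldots, k+N]$ only inverter $i$ is actuating. Under this condition the $j$-th diagonal entry of the effective $G$ is zero, so the $j$-th column contribution drops out of \eqref{v_update} entirely. What remains is a decoupled recursion: the node-$i$ voltage evolves autonomously as an affine scalar system, while any other node $j$ simply accumulates a scaled copy of the same error term. Concretely, I would extract from \eqref{v_update_scalar1}--\eqref{v_update_scalar2} the one-step relations
\[
v_i[r+1]-v_i[r] = -\chi g\,(v_i[r]-v^{ref}), \qquad v_j[r+1]-v_j[r] = -\eta\chi g\,(v_i[r]-v^{ref}).
\]

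The key observation is that at every step $r$ in the interval, the right-hand side of the node-$j$ increment is exactly $\eta$ times the right-hand side of the node-$i$ increment, because $X_{ji}/X_{ii} = \eta\chi/\chi = \eta$. Since this ratio is constant and the common driving signal $v_i[r]-v^{ref}$ cancels, I telescope from $r=k$ to $r=k+N-1$ to obtain
\[
v_j[k+N]-v_j[k] \;=\; \eta\,\bigl(v_i[k+N]-v_i[k]\bigr),
\]
which, after substituting the definitions of $\Delta v_j^{inv}$ and $\Delta v_i^{inv}$, is exactly the claim of the lemma. No appeal to the geometric-series form of \eqref{parametrized} is needed; the identity is really just linearity plus the column structure of $X$.

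The main subtlety I expect to flag is verifying that the $v_i$-recursion genuinely decouples from $v_j$ throughout the interval, since the telescoping argument collapses immediately if $v_j$ ever feeds back into the update for $v_i$. This is immediate here because the single-inverter assumption enforces $G_{jj}=0$, so $v_j$ simply does not appear on the right-hand side of \eqref{v_update}. Remark \ref{remark_deltaV_sign} item 1, which guarantees that the inverter drives $v_i$ monotonically throughout the interval, is consistent with this picture but is not needed for the ratio argument. Beyond this observation, the proof is a direct consequence of the affine structure of the inverter dynamics, so I anticipate no substantive technical obstacle.
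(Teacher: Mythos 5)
Your proposal is correct and follows essentially the same route as the paper's proof: both extract the one-step increments at nodes $i$ and $j$ from \eqref{v_update_scalar1}--\eqref{v_update_scalar2}, observe that their ratio is the constant $\eta = X_{ij}/X_{ii}$, and telescope over the interval $[k,\,k+N]$. Your remark that the telescoping identity needs no appeal to Remark \ref{remark_deltaV_sign} item 1 is a small but valid simplification of the paper's argument, which invokes that remark to justify summing the per-step changes.
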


\subsection{Partitioning $W$ in State Space when $g<0$}
\label{4-device_region_section}
In this section we will partition $W$ into $D$, $W_g$, $W_b$, $W_o$,  based on the possible trajectories from starting the system in each region. We use a similar process to Section \ref{2ltc_section}. 
The $W_b$ region will be when both voltages are above or below the deadband. In that case, both inverters act between LTC taps, and from Assumption \ref{steepest} oscillations cannot occur. Then $W_o$ will be where only one inverter acts between taps, and is close enough to $D$ for oscillations to occur. An oscillation would begin with an inverter pushing the voltages further from the deadband (Remark \ref{remark_deltaV_sign} \#3) until an LTC tap towards the deadband and overshoots it. Therefore, the boundary of $W_o$ comes from the states $v$ where after inverter action(s) and an LTC tap the state is within the deadband edge:
\begin{subequations}
\begin{align}
    v_1+r_1-\bar{v}_L&<v^+ \label{r_eqn1}\\
    v_2+r_2-\bar{v}_L&<v^+ \label{r_eqn2}\\
    v_1+r_1+\bar{v}_L&>v^- \label{r_eqn3}\\
    v_2+r_2+\bar{v}_L&>v^-. \label{r_eqn4}
\end{align}\label{r_eqns}
\end{subequations}
Function $r$ is the inverters' change in $v$ before an LTC taps:
\begin{subequations}
    \begin{align}
        r_1 &\coloneqq \Delta v^{inv}_1(\chi,N_1,v_1)+\Delta v^{inv}_1(X_{12},N_1,v_2) \label{r_defn1}\\
        r_2 &\coloneqq \Delta v^{inv}_2(X_{12},N_1,v_1)+\Delta v^{inv}_2(\chi,N_1,v_2). \label{r_defn2}
    \end{align}
\end{subequations}
Note that $r$ depends on both $v_1$ and $v_2$, but when only one inverter acts, one term in each of \eqref{r_defn1} and \eqref{r_defn2} zeros out, causing \cref{r_eqn1,r_eqn2,r_eqn3,r_eqn4} to depend on only $v_1$ \emph{or} $v_2$.
We can now define the state space regions as
\begin{subequations}
\begin{align}
\begin{split}
    W_o = &\{(v_1,v_2) \in W:  \\&(v_1 \in M(D,v_1^*-v^+) ~\text{and}~v_2\in D),\\
      &\text{or}~(v_2\in M(D,v_2^*-v^+) ~\text{and}~v_1\in D)\},
\end{split}\label{Wo_4device}\\
\begin{split}
    W_b = &\{(v_1,v_2) \in W:  \\&(v_1,v_2>v^+, v_2\leq v_1, \eqref{r_eqn1} ~\text{holds}),\\
    &\text{or}~ (v_1,v_2>v^+, v_2> v_1, \eqref{r_eqn2} ~\text{holds}),\\
    &\text{or}~ (v_1,v_2<v^-, v_2> v_1\eqref{r_eqn3} ~\text{holds}),\\
    &\text{or}~ (v_1,v_2<v^-, v_2\leq v_1, \eqref{r_eqn4} ~\text{holds})\},
\end{split}\label{Wb_4device}
\end{align}
\label{2ltc_region_defns}
\end{subequations}
and $W_g$ is what remains of $W$ ($W_g=W \setminus (D \cup W_o \cup W_g$).
 The $v_1^*$ and $v_2^*$ in \eqref{Wo_4device} are the $v_1$ and $v_2$ when \eqref{r_eqn1} and \eqref{r_eqn2} are set to be equalities. The $W_o$ region definition is the same as that of the 2-LTC system $\Sigma_1$ \eqref{2ltc_region_defns}, with the distinction that equations \eqref{r_eqns} have the additional $r$ term. 
\begin{figure}[!h]
    \centering
    \begin{subfigure}{0.4\columnwidth}
        \begin{center}
	    \resizebox{0.99\columnwidth}{!}{%
		\begin{tikzpicture}
			\input{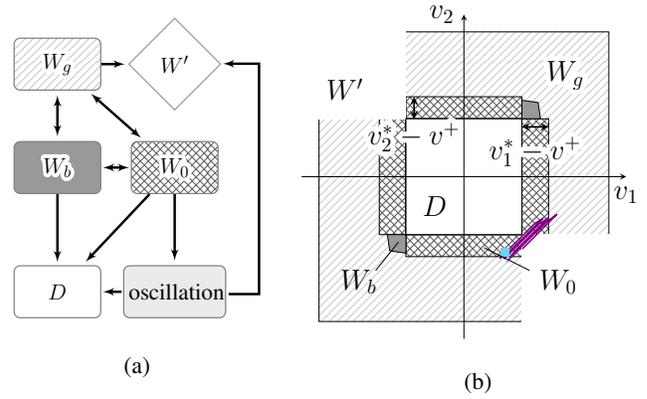}
		\end{tikzpicture}
		}
	    \end{center}
        \caption{}
    \end{subfigure}
    \hfill
    \begin{subfigure}{0.55\columnwidth}
        \begin{center}
	    \resizebox{0.99\columnwidth}{!}{%
		\begin{tikzpicture}
			\Large
\begin{axis}[
unit vector ratio*=1 1 1,,
 	xmin=-\axlim,
 	xmax=\axlim,
 	ymin=-\axlim,
 	ymax=\axlim,
 	axis lines=none]

	\draw[pattern={north east lines},pattern color=green] (-\epsi-3*\VL,-\epsi-3*\VL) rectangle (\epsi+3*\VL,\epsi+3*\VL) node[pos=.85] {\contour{white}{$W_g$}};
	\draw[thin, pattern={crosshatch},pattern color=orange] (-\epsi-\VL+\pr,-\epsi) rectangle (\epsi,\epsi); %
	\draw[thin, pattern={crosshatch},pattern color=orange] (\epsi+\VL-\pr,\epsi) rectangle (-\epsi,-\epsi); %
	\draw[thin, pattern={crosshatch},pattern color=orange] (\epsi,\epsi+\VL-\prr) rectangle (-\epsi,-\epsi); %
	\draw[thin, pattern={crosshatch},pattern color=orange] (-\epsi,-\epsi-\VL+\prr) rectangle (\epsi,\epsi); %
	\draw[thin, fill=blue] (-\epsi-\VL+\pr+0.009,-\epsi-\VL+\pr+0.009) -- (-\epsi,-\epsi-\VL+\pr+0.007) -- (-\epsi,-\epsi) -- (-\epsi-\VL+\pr+0.007,-\epsi) -- (-\epsi-\VL+\pr+0.009,-\epsi-\VL+\pr+0.009);
	\draw[thin, fill=blue] (\epsi+\VL-\pr-0.009,\epsi+\VL-\pr-0.009) -- (\epsi,\epsi+\VL-\pr-0.007) -- (\epsi,\epsi) --  (\epsi+\VL-\pr-0.007,\epsi) -- (\epsi+\VL-\pr-0.009,\epsi+\VL-\pr-0.009);
	\draw[draw=none, fill=white] (-\epsi,\epsi) rectangle (-\epsi-3*\VL-0.1,\epsi+3*\VL+0.01) node[pos=.3] {$W'$};
	\draw[draw=none, fill=white] (\epsi,-\epsi) rectangle (\epsi+3*\VL+0.01,-\epsi-3*\VL-0.01);
	\draw[thin, fill=white] (-\epsi,-\epsi) rectangle (\epsi,\epsi) node[pos=.25] {$D$};
\end{axis}

\begin{axis}[
unit vector ratio*=1 1 1,,
 	axis x line=center,
 	axis y line=center,
 	xlabel={$v_1$},
 	ylabel={$v_2$},
 	xlabel style={below},
 	ylabel style={left},
 	xmin=-\axlim,
 	xmax=\axlim,
 	ymin=-\axlim,
 	ymax=\axlim,
 	ticks=none,
    ]

	\addplot[color = violet, thick] table [col sep=comma] {trajectory.csv}; 
	\node[rectangle,fill=lightGreen,inner sep=0.07cm] at (0.035,-0.065) {};
	\draw[thick, >=stealth, <->, ] (\epsi+\VL-\pr,\epsi+\VL-\pr - 0.03) -- (\epsi,\epsi+\VL-\pr - 0.03) node[pos=.5, below] {\contour{white}{$v_1^*-v^+$}};
	\draw[thick, >=stealth, <->, ] (-\epsi-\VL+\pr + 0.03, +\epsi+\VL-\prr) -- (-\epsi-\VL+\pr + 0.03, +\epsi,) node[pos=.7, below] {\contour{white}{$v_2^*-v^+$}};
\end{axis}

\draw node at (4.5,1) {$W_0$};
\draw node at (1,1) {$W_b$};
\draw[thin] (4,1.2) -- (3.2,1.6);
\draw[thin] (1.3,1.2) -- (1.7,1.7);
		\end{tikzpicture}
		}
	    \end{center}
        \caption{}
    \end{subfigure}
    \caption{Regions for four-device system $\Sigma_3$ when $g<0$.}
    \label{4device_regions}
\end{figure}

Fig. \ref{4device_regions} shows these state space regions as well as the possible transitions between regions. The trajectory of Fig. \ref{fig_walkthrough} is plotted on Fig. \ref{4device_regions} with a blue square marker for the IC. Observe that the trajectory oscillates until eventually leaving $W$. In addition to the possible transitions for system $\Sigma_1$ in Fig. \ref{2ltc_regions}, this system has: $W_b \rightarrow W_g$, $W_o \rightarrow W_b$, and $W_o \rightarrow W_b$ since the inverter pushes voltages away from $D$. 

When the IC is in $W_g$ or $W_b$, there may exist a $k_0>0$ where the full state vector $x[k_0]=[0,0,0,v_1[0],v_2[0]]^\top$. Because $\Sigma_3$ is time-invariant, the Fig. \ref{4device_regions} regions apply to any $x[k_0]$ where $x[k_0]=x[0]$. For example, the system IC could be in $W_g$ then later have $x[k_0] \in W_o$, after which point the behavior would be the same as if the system started in $W_o$. 
\subsection{Conditions for Oscillations to Begin when $g<0$}

In this section we consider trajectories where $v[k_0] \in W_o$. As illustrated in Fig. \ref{4device_regions}, $W_o$ is comprised of four disjoint regions. If hunting occurs in the lower region of $W_o$ ($f_1=0$ while $f_2<0$), the hybrid system mode sequence (MS) that creates one period of oscillation is $\alpha_1\coloneqq \{m50,m10,m60,m50,m40,m70\}$ after omitting the increment mode ($m80$) for brevity. Similarly, let $\alpha_2$, $\alpha_3$, and $\alpha_4$ be the oscillation sequences when the IC is in the upper, left, and right-hand regions of $W_o$, respectively.

\begin{lemma}(Basis step for oscillations)
    Consider system $\Sigma_3$ with $g<0$. When $f_1=0$ while $f_2<0$, necessary and sufficient conditions for completing one oscillation period starting at time $k_0$ are
    \begin{subequations}
     \label{base_cond}
    \begin{align}
        v_1[k_0]+\eta\Delta v_2^{inv}(\chi,k_0+N_2,k_0,v_2[k_0]) > v^- \label{cond1}\\
        v[T_1] > v^- \label{cond2}\\
        v[T_1]+ \eta\Delta v_1^{inv}(\chi,k_0+N_1,k_0,v[T_1]) < v^+ \label{cond3}\\
        v[T_2]+\Delta v^{inv}(\chi,k_0+N_1,k_0,v[T_2])-\bar{v}_L <v^+ \label{cond4}
    \end{align}
    \end{subequations}
    where $v[T_1]= v_2[k_0]+\Delta v_2^{inv}(\chi,k_0+N_2,k_0,v_2[k_0])+\bar{v}_L$, and $v[T_2]=v_1[k_0]+\eta\Delta v_2^{inv}(\chi,k_0+N_2,k_0,v_2[k_0])+\bar{v}_L$. 
    \label{basis_lemma}
\end{lemma}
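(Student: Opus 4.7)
The plan is to trace the hybrid system $\Sigma_3$ through the candidate mode sequence $\alpha_1$ transition by transition, extracting from each transition a scalar inequality that is both necessary and sufficient for that transition to fire as prescribed. Starting from $f_1[k_0]=0$ and $f_2[k_0]<0$, the inverter timer $z_3$ expires first by Remark~\ref{remark:delay}, so the first firing is mode $m50$. By Remark~\ref{remark_deltaV_sign}(3), a negative $g$ pushes $v_2$ further from $D$, while by Lemma~\ref{homogeneous} the coupled shift of $v_1$ is the same quantity scaled by $\eta$; Remark~\ref{remark_deltaV_sign}(1) additionally provides monotonicity, which is what allows a phase-long constraint to reduce to a check at the endpoint. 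Because only $v_2$ is outside $D$, it is LTC2 whose delay accumulates, so the inverter fires exactly $N_2$ times before the first tap, and the requirement that $v_1$ never dip below $v^-$ during this phase is captured by its terminal value, giving exactly \eqref{cond1}.

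After the inverter phase concludes, LTC2 taps up and both voltages gain $\bar v_L$, producing the scalars $v[T_1]$ (the post-tap $v_2$) and $v[T_2]$ (the post-tap $v_1$) defined in the statement. For $\alpha_1$ to continue, the tap must return $v_2$ into the deadband — otherwise a second LTC2 tap would intervene — which is condition \eqref{cond2}; the concurrent lift of $v_1$ into overvoltage is automatic from the hypothesis $v[k_0]\in W_o$ through definition \eqref{Wo_4device} and provides the $f_1>0$ trigger that starts Phase~2. The symmetric analysis of Phase~2, with the inverter actuating $N_1$ times on $v_1$ followed by an LTC1 tap down, yields \eqref{cond3} (the $\eta$-coupled inverter shift of $v_2$ does not cross $v^+$) and \eqref{cond4} (the tap brings $v_1$ back below $v^+$). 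I would then verify that the closing state has $v_1\in D$ and $v_2<v^-$ once again: since $\eta<1$, the net change in $v_2$ over a full cycle is dominated by the two opposite-signed $\bar v_L$ LTC contributions and returns $v_2$ to undervoltage, restoring the initial configuration and closing the period.

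Sufficiency is immediate by construction — if all four inequalities hold, each of the six modes in $\alpha_1$ fires in turn with no competing switching condition intervening. For necessity I would argue by contrapositive on each inequality: failing \eqref{cond1} causes LTC1 to fire during Phase~1, failing \eqref{cond2} requires a second LTC2 tap, failing \eqref{cond3} interposes an LTC2 event in Phase~2, and failing \eqref{cond4} leaves $v_1$ as an overvoltage so the cycle does not close in one period. The main obstacle will be making this necessity argument airtight, since the four conditions only examine terminal states of each sub-phase; I would lean on the sign-monotonicity from Remark~\ref{remark_deltaV_sign}(1) to promote the endpoint tests to the entire inverter phase, and on the delay ordering $d_{inv}<d_{L1}<d_{L2}$ to rule out any interleaved tap. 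A subtler point is that $v[T_1]$ and $v[T_2]$ are scalar overloads for different voltage components, so careful bookkeeping is needed when applying Lemma~\ref{homogeneous} to track which coordinate receives the direct $\chi$-weighted shift versus the $\eta\chi$-weighted coupled shift at each step.
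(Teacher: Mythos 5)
Your overall strategy --- walking the mode sequence phase by phase and extracting one scalar inequality per transition, using Remark~\ref{remark_deltaV_sign} for sign and monotonicity and Lemma~\ref{homogeneous} to rewrite the coupled shifts as $\eta$-scaled direct shifts --- is exactly the paper's proof, and your matching of \eqref{cond1}, \eqref{cond3}, and \eqref{cond4} to their respective events (no premature crossing of $v^-$ during phase one, no crossing of $v^+$ by $v_2$ during phase two, and the LTC1 tap returning $v_1$ below $v^+$) agrees with the paper.

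The one place you diverge is \eqref{cond2}, and it opens a gap. You read \eqref{cond2} as ``the tap returns $v_2$ above $v^-$'' (which is what the inequality $v[T_1]>v^-$ literally says) and then relocate the requirement that LTC2's tap make $v_1$ an overvoltage to the claim that this is ``automatic from $v[k_0]\in W_o$.'' It is not: the relevant branch of \eqref{Wo_4device} only requires $v_1[k_0]\in D$, i.e.\ $v^-\le v_1[k_0]\le v^+$, with no lower bound beyond $v^-$. After inverter2's coupled downward push and the $+\bar{v}_L$ tap, $v_1$ sits at $v[T_2]=v_1[k_0]+\eta\Delta v_2^{inv}+\bar{v}_L$, and if $v_1[k_0]$ is near $v^-$ this need not exceed $v^+$; then $f_1=0$ after the tap, inverter1 and LTC1 never fire, and $\alpha_1$ does not complete. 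So $v[T_2]>v^+$ is a genuine additional requirement, not a consequence of $W_o$ membership. The paper attributes this overshoot to \eqref{cond2} (its prose introduces that condition as being ``to create overshoot so that $v_1$ has an overvoltage''), even though the displayed inequality concerns $v_2$; your reading exposes that tension but your fix does not close it. A smaller point: your closing check that $v_2$ returns to undervoltage is not part of this lemma (that is the induction step, Lemma~\ref{inductioN_1neg}), and the mechanism you cite --- domination by the two $\bar{v}_L$ contributions --- cannot be right, since those contributions cancel exactly over one period.
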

\begin{proof}
Follow along with Fig. \ref{fig_walkthrough}. We will express sequence $\alpha_1$ in terms of the system voltage trajectories. For the IC, $v_2[k_0]$ is an undervoltage and $v_1[k_0]$ in the deadband. Inverter2 responds to the $v_2$ undervoltage, but due to $g<0$ it decreases both voltages. To prevent inverter2 from pushing both voltages below the deadband, we require that 
\begin{equation}
     v_1[k_0]+\Delta v_1^{inv}(X_{12},k_0+N_2,k_0,v_2[k_0]) > v^-.  \nonumber\\
\end{equation}
Next, with the persisting $v_2$ undervoltage, LTC2 taps which increases both voltages. To create overshoot so that $v_1$ has an overvoltage, we require that
\begin{equation}
    v_2[k_0]+\Delta v_2^{inv}(X_{22},k_0+N_2,k_0,v_2[k_0])+\bar{v}_L > v^-. \nonumber\\
\end{equation}
Inverter1 responds to the $v_1$ overvoltage, but due to $g<0$ it increases both voltages. To prevent inverter1 from pushing both voltages above the deadband, we require that
\begin{equation}
    v[T_1]+ \Delta v_2^{inv}(X_{12},k_0+N_1,k_0,v_1[T_2])
    < v^+. \nonumber\\
\end{equation}
With the persisting $v_1$ overvoltage, LTC1 taps which decreases both voltages. To create overshoot so that $v_2$ becomes an undervoltage, we require that
\begin{equation}
    v[T_2]+\\ \Delta v_1^{inv}(X_{11},k_0+N_1,k_0,v_1[T_2])-\bar{v}_L <v^+.  \nonumber\\
\end{equation}
Finally, we substitute equation \eqref{damp_eqn} into all above equations and apply Lemma \ref{homogeneous} to the first and fourth equation, giving the conditions in the lemma statement.
\end{proof}

\subsection{Showing Oscillations Continue and Grow when $g<0$}
Next we show that once the MS $\alpha_1$ occurs, the next MS will be $\alpha_1$ while $v \in W_o$. The same process can be applied to the other MS $\alpha_2$, $\alpha_3$, and $\alpha_4$.

Define $m[k_i]$ as the mode that system $\Sigma_3$ is in at time $k_i$. Across each $\alpha_1$ MS, $v^{\text{diff}}$ increases. 
We will prove that in the next section, but note that as a result the voltage that is outside $D$ gets further from $v^{ref}$ across each $\alpha_1$ MS. Next, consider the following condition:
\begin{equation}
    v_1^{inv}(k_5,k_4,v_1[k_4])+\Delta v_1^{inv}(k_3,k_2,v_1[k_2])>0 \label{cond5}
\end{equation}

\begin{lemma}(Induction step for oscillations)
    Consider system $\Sigma_3$ with $g<0$. Assuming condition \eqref{cond5} and Assumption \ref{assump:ov_uv} holds, if the system completes MS $\alpha_1$, the MS $\alpha_1$ will repeat.
    \label{inductioN_1neg}
\end{lemma}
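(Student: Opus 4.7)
The plan is to prove the result by showing that after one complete $\alpha_1$ mode sequence, the state at time $k_4$ satisfies the hypotheses of Lemma \ref{basis_lemma}, so that applying that lemma again guarantees a second execution of $\alpha_1$. Since Lemma \ref{basis_lemma} provides \emph{necessary and sufficient} conditions, this reduces the task to verifying that conditions \eqref{cond1}--\eqref{cond4} continue to hold with $k_0$ replaced by $k_4$, using the additional hypothesis \eqref{cond5}. The overall argument is then an induction on the cycle index, with Lemma \ref{basis_lemma} furnishing the base case.

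First I would trace the state evolution through $\alpha_1 = \{m50, m10, m60, m50, m40, m70\}$, using Fig.~\ref{fig_walkthrough} as the template, and verify the structural preconditions for re-entering $\alpha_1$ at time $k_4$: namely $v_1[k_4] \in D$ and $v_2[k_4] < v^-$. The $v_2$ undervoltage is produced by the $m40$ LTC2 tap-down overshooting $v^-$, while $v_1[k_4]$ landing back in $D$ follows from Assumption \ref{steepest} (each tap plus accompanying inverter action is bounded below $2\varepsilon$) together with Assumption \ref{assump:ov_uv} (the trajectory stays in $W$). This places the system in the lower strip of $W_o$ and guarantees the next mode activated is $m50$ driven by inverter2, matching the structural setup of Lemma \ref{basis_lemma}.

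Next I would verify \eqref{cond1}--\eqref{cond4} at time $k_4$. Remark \ref{remark_deltaV_sign}(3) fixes the sign of each $\Delta v^{inv}$ term from the sign of the corresponding $v_i - v^{ref}$, which is known from the structural step above. Conditions \eqref{cond1} and \eqref{cond2} are ``non-exit'' bounds that prevent the inverter and LTC actions from pushing the trajectory out of $W$; they persist because Assumption \ref{assump:ov_uv} confines us to $W$ and because the growth of $v^{\text{diff}}$ (to be proved in the next subsection) moves $v_1[k_4]$ strictly further from $v^-$ than $v_1[k_0]$ was. Conditions \eqref{cond3} and \eqref{cond4} are the overshoot bounds after LTC1 taps, and this is precisely where hypothesis \eqref{cond5} enters: the sum of the two inverter-on-$v_1$ displacements appearing in \eqref{cond5} governs how the LTC1 overshoot aggregates across cycles, and the positivity in \eqref{cond5} is what preserves the $<v^+$ side of those inequalities.

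The main obstacle will be the state-dependence of $\Delta v^{inv}$: because $v_2[k_4] \neq v_2[k_0]$, the inverter displacements in cycle two are not literally equal to those of cycle one, so one cannot simply reuse the first-cycle inequalities. The delicate bookkeeping is to combine hypothesis \eqref{cond5}, which controls the $v_1$-axis accumulation, with Lemma \ref{homogeneous}, which converts $\Delta v_1^{inv}$ contributions into $\eta$-damped $\Delta v_2^{inv}$ contributions, so that each of \eqref{cond1}--\eqref{cond4} at $k_4$ reduces to an inequality already implied by the previous cycle's \eqref{base_cond} plus \eqref{cond5}. Once this reduction is carried out for one step, the same argument applied to the next $\alpha_1$ iteration closes the induction.
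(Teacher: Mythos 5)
Your overall strategy---re-verify the basis conditions \eqref{cond1}--\eqref{cond4} with $k_0$ replaced by $k_4$ and close an induction---is a legitimate and arguably cleaner framing than the paper's, which instead walks mode-by-mode through $m[k_4],\dots,m[k_7]$ and justifies each transition directly: Assumption \ref{assump:ov_uv} forces $v_1[k_4]\in D$ once $v_2[k_4]\notin D$, Assumption \ref{steepest} keeps the coupled voltage inside $D$ during each inverter phase, and condition \eqref{cond5} decides whether the LTC2 tap overshoots. The problem is that your sketch misassigns the role of \eqref{cond5}. In the paper it is used at the $m[k_6]$ step as a dichotomy: if \eqref{cond5} fails, the LTC2 tap does \emph{not} overshoot, $v_1[k_6]\in D$, and the trajectory settles in the deadband; if it holds, the accumulated inverter-1 displacement is large enough that the tap produces a $v_1$ overvoltage, inverter1 fires, and the sequence continues. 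That is, \eqref{cond5} is the \emph{continuation} (overshoot) condition. You instead claim its positivity ``preserves the $<v^+$ side'' of \eqref{cond3} and \eqref{cond4}---but a positive net upward displacement of $v_1$ makes those upper bounds \emph{harder} to satisfy, not easier, so the inequality direction in your use of the hypothesis is backwards.

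A second gap: you justify the persistence of \eqref{cond1}--\eqref{cond2} by asserting that the growth of $v^{\text{diff}}$ ``moves $v_1[k_4]$ strictly further from $v^-$ than $v_1[k_0]$ was.'' Growth of $v^{\text{diff}}=v_1-v_2$ does not imply $v_1$ increases: from the computation in Lemma \ref{expansion}, $v_1[k_4]-v_1[k_0]$ is the sum of a negative coupling term $\eta\Delta v_2^{inv}$ and a positive term $\Delta v_1^{inv}$, and its sign is not determined by the growth of $v^{\text{diff}}$ alone. Finally, the ``delicate bookkeeping'' you flag as the main obstacle---reducing the state-dependent displacements at $k_4$ to inequalities already implied by the cycle at $k_0$---is precisely the hard step, and it is left as a plan rather than executed; since the trajectory moves outward, the second-cycle displacements are generally larger in magnitude, so the first-cycle inequalities do not transfer for free. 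The paper sidesteps this entirely by never re-deriving \eqref{cond1}--\eqref{cond4}, instead invoking the coarse uniform bound of Assumption \ref{steepest} for the non-exit steps.
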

\begin{proof}    
Use Fig. \ref{fig_walkthrough} to follow along.
$m[k_4]$: at $k_0$ inverter2 actuates and $m[k_0]=m50$ so $v_2[k_0] \notin D$. Because $v_2$ gets further from $v^{ref}$, $v_2[k_4]<v_2[k_0]$, so $v_2[k_4] \notin D$ too. From Assumption \ref{assump:ov_uv}, $v_1[k_4]$ cannot be outside $D$, so $v_1[k_4] \in D$. Thus inverter2 actuates at $k_4$ and $m[k_4]=m50$.

$m[k_5]$: Because $g<0$, inv2 actuating ($m[k_4]=m50$) will keep $v_2 \notin D$. If the inverter's coupling effect on $v_1$ is strong enough, even though $v_1[k_4] \in D$, $v_1$ could go below the deadband by the time of $k_5$. By Assumption \ref{steepest}, the inverter actions summed with an LTC tap are not large enough for this. Thus at $k_5$, $v_1$ is still inside $D$ and $v_2$ is still outside $D$. So at $k_5$ LTC2 taps up and $m[k_5]=m30$.

$m[k_6]$: The LTC2 tap may not overshoot the deadband at $k_6$. If \eqref{cond5} does not hold, then $v_1[k_6]\in D$, by Assumption \ref{assump:ov_uv} $v_2[k_6] \in D$ too, and the system stays in the deadband. If \eqref{cond5} does hold, inverter1 actuates and $m[k_4]=m50$. 
$m[k_7]$: Assumption \ref{steepest} disallows $v_2$ from going above $D$ (like how in $m[k_5]$ it disallows $v_1$ from going below $D$). Thus at $k_7$, $v_2$ is still inside $D$ and $v_1$ is still outside $D$. So at $k_7$ LCT1 taps down and $m[k_7]=m20$.
\end{proof}
\begin{lemma} (Oscillations grow)
    If system $\Sigma_3$ with $g<0$ has oscillations, $v^{\text{diff}}$ increases after each oscillation period. \label{expansion}
\end{lemma}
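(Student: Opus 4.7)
The plan is to decompose one period of oscillation into its four sub-events and compute $\Delta v^{\text{diff}} := v^{\text{diff}}[k_{\text{end}}] - v^{\text{diff}}[k_0]$ as a telescoping sum, using Lemma~\ref{homogeneous} to convert single-node inverter actuations into their coupled two-node voltage changes and item~3 of Remark~\ref{remark_deltaV_sign} to fix signs. I will sketch the argument for MS $\alpha_1$; the other three mode sequences $\alpha_2, \alpha_3, \alpha_4$ are handled by symmetric relabeling of which node hosts the out-of-band voltage and which LTC taps in which direction.

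Following the walkthrough of Fig.~\ref{fig_walkthrough}, one period of $\alpha_1$ consists of (i) inverter~2 actuating on $[k_0,k_1]$ while $v_2[k_0]<v^-$, (ii) an LTC~2 up-tap, (iii) inverter~1 actuating on $[k_2,k_3]$ while $v_1[k_2]>v^+$, and (iv) an LTC~1 down-tap. Steps (ii) and (iv) shift both voltages uniformly by $\pm\bar v_L$, so their contribution to $\Delta v^{\text{diff}}$ is zero. For step (i), set $a_2 := \Delta v_2^{inv}(\chi,k_0+N_2,k_0,v_2[k_0])$; since $v_2[k_0]-v^{ref}<0$ and $g<0$, Remark~\ref{remark_deltaV_sign}(3) gives $a_2<0$, and by Lemma~\ref{homogeneous} the coupled change at node~1 is $\eta a_2$. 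The contribution to $v^{\text{diff}}$ is therefore $\eta a_2 - a_2 = (\eta-1)\,a_2 > 0$ because $\eta<1$. An identical computation at step (iii), with $a_1 := \Delta v_1^{inv}(\chi,k_2+N_1,k_2,v_1[k_2])>0$ (overvoltage case of Remark~\ref{remark_deltaV_sign}(3)), yields a contribution $(1-\eta)\,a_1>0$.

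Summing the four contributions gives the clean telescope $\Delta v^{\text{diff}} = (1-\eta)(a_1 - a_2) > 0$. Strictness comes from $0<\eta<1$ together with the fact that $a_1>0>a_2$ whenever the system is actually oscillating: if either inverter saw a zero excursion, the corresponding voltage would already lie in $D$ and the period would terminate in the deadband rather than continue.

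The main obstacle I anticipate is not the algebra but a bookkeeping check: the sign of each $a_i$ must be controlled throughout the entire actuation interval $[k_0,k_1]$ (resp.~$[k_2,k_3]$), not merely at its left endpoint. This is exactly what item~1 of Remark~\ref{remark_deltaV_sign} supplies via Assumption~\ref{steepest}, which guarantees each inverter actuates monotonically in a single direction between LTC taps. Extending to $\alpha_2,\alpha_3,\alpha_4$ is then bookkeeping: the roles of the two inverters and the directions of the two LTC taps are permuted, but the structural form $(1-\eta)\cdot(\text{positive quantity})$ survives each permutation, so the strict increase of $v^{\text{diff}}$ holds in every oscillation regime.
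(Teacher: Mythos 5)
Your proof is correct and follows essentially the same route as the paper's: both decompose one period of $\alpha_1$ into the two inverter actuations and two LTC taps, observe that the taps cancel in $v^{\text{diff}}$, and use Lemma~\ref{homogeneous} together with Remark~\ref{remark_deltaV_sign} to arrive at the same expression $(1-\eta)(a_1-a_2)>0$. Your telescoping organization and the explicit note on why strictness holds are slightly cleaner bookkeeping, but the argument is the same.
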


Let $s_1$ be the set of $(v_1[k_0],v_2[k_0])$ where basis conditions \eqref{base_cond} hold. The sets of conditions that correspond to sequences $\alpha_2$,$\alpha_3$, and $\alpha_4$ are derived with the same process as the Lemma \ref{basis_lemma} proof, and we call their corresponding sets $s_2$, $s_3$, and $s_4$. Then define $S \coloneqq (\bigcup_{i=1,...,4} s_i) \in W_o$, which is the only voltage region where oscillations can begin. Note that with the $\Delta v^{inv}$ parametrized form \eqref{parametrized}, $S$ can be represented by purely the variables listed in Table \ref{tab:base_parms}.  

 \begin{theorem}
Define $T_1$ as the first instant where $v[T_1] \in D$, and define $T_2$ as the first instant where $v[T_2] \in W'$. When $g<0$, $v[k_0] \in S$ is necessary but not sufficient for system $\Sigma_3$ to exhibit growing oscillations starting at time $k_0$. These oscillations terminate either at time $T_1$ or $T_2$. \label{main_thm}
 \end{theorem}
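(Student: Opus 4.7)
The plan is to split the claim into three parts—necessity of $v[k_0]\in S$, failure of sufficiency, and the termination dichotomy $T_1$ vs.\ $T_2$—and to handle each by composing the basis step (Lemma~\ref{basis_lemma}), the induction step (Lemma~\ref{inductioN_1neg}), and the expansion property (Lemma~\ref{expansion}) already proved for the four symmetric oscillation modes $\alpha_1,\alpha_2,\alpha_3,\alpha_4$.

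For \emph{necessity}, I first observe that a ``growing oscillation starting at $k_0$'' means at least one complete period of one of the hybrid mode sequences $\alpha_i$ occurs. From Section~\ref{4-device_region_section}, the disjoint pieces of $W_o$ that correspond to $\alpha_1,\dots,\alpha_4$ are exactly the preimages of the MS-starting configurations (one inverter idle, one voltage out of $D$ by only a small amount). By Lemma~\ref{basis_lemma} and its three symmetric analogues, the conditions defining $s_i$ are \emph{necessary} for completing one period of $\alpha_i$. Thus, if the system undergoes at least one oscillation, $v[k_0]\in s_i$ for some $i$, and hence $v[k_0]\in S=\bigcup_{i=1}^4 s_i$.

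For \emph{failure of sufficiency}, I exhibit the gap between the basis step and the induction step. If $v[k_0]\in s_1$, Lemma~\ref{basis_lemma} guarantees one completed period of $\alpha_1$ is \emph{possible}, but whether the subsequent period $\alpha_1$ actually fires depends on condition~\eqref{cond5} (or its symmetric counterpart). As observed in the proof of Lemma~\ref{inductioN_1neg}, when \eqref{cond5} fails the LTC tap does not overshoot the deadband boundary, in which case $v[k_6]\in D$ and, by Assumption~\ref{assump:ov_uv}, the trajectory is absorbed into the invariant set $D$. Hence $v[k_0]\in S$ without \eqref{cond5} gives a counterexample to sufficiency.

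For the \emph{termination dichotomy}, I combine Lemma~\ref{expansion} with the geometry of $W$. Lemma~\ref{expansion} asserts that $v^{\text{diff}}$ strictly increases over each completed period of any $\alpha_i$. The set $W$ is bounded (it sits inside the box $\lVert v-v^{ref}\rVert_\infty<3\varepsilon$ from definition~\eqref{P}), so a monotonically increasing, positively separated sequence $\{v^{\text{diff}}[k]\}$ in $W$ can only be sustained finitely many times. Therefore one of two mutually exclusive events occurs first: either the induction-step condition~\eqref{cond5} fails at some period and the trajectory lands in the invariant deadband at the corresponding instant $T_1$; or \eqref{cond5} continues to hold until $v^{\text{diff}}$ has grown past the hourglass bound encoded in $H$ (equivalently, one of the $v_i$ leaves the $3\varepsilon$-box of $P$), in which case the trajectory exits $W$ at $T_2$. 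No third outcome is possible because $S\subset W_o$ is disjoint from $D$, so the trajectory cannot remain in $W$ without oscillating.

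The main obstacle I anticipate is verifying the expansion claim uniformly across the four symmetric sequences $\alpha_i$ when only $\alpha_1$ has been treated explicitly; the cleanest route is to argue by the reflection/translation symmetries of the hybrid model (sign flips in $f_1,f_2$ and swapping rows of $X$), so that Lemmas~\ref{basis_lemma}--\ref{expansion} transfer to each $\alpha_i$ with only relabelled conditions, after which $S$ is well-defined and the three bullet points above close the proof.
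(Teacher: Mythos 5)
Your proof follows essentially the same route as the paper's: it composes Lemma~\ref{basis_lemma} (necessity of the basis conditions for one period), Lemma~\ref{inductioN_1neg} (continuation contingent on \eqref{cond5}), and Lemma~\ref{expansion} (growth forcing eventual exit into $D$ or $W'$). Your version is in fact more explicit than the paper's terse three-sentence argument---notably in pinpointing the failure of \eqref{cond5} as the reason sufficiency fails and in invoking the boundedness of $W$ for the termination dichotomy---but the decomposition is the same.
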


\begin{proof} %
 Lemma \ref{basis_lemma} (basis step) establishes that \eqref{base_cond} are necessary for on period of oscillations to occur. By Lemma \ref{inductioN_1neg} (induction step), the system continues to oscillate after the first oscillation period. Lemma \ref{expansion} establishes that the oscillations grow, so the system will eventually land in $D$ or outside $W$. 
 \end{proof} %
 
 Theorem \ref{main_thm} implies that when $g<0$, the only way for $\Sigma_3$ to exhibit oscillations is when $v[k_0] \in S$. To use this theorem, engineers would choose control parameters such that $S=\emptyset$. Further, if each device has a copy of all control parameters, they can reject incoming parameter updates when $S\neq \emptyset$. 
\section{Results and Scalability}
\label{results_section}

\subsection{Parameter Plots}
In this section we use MATLAB's MPT toolbox to plot set $S$ in the $(v_1,v_2)$ space and examine its implications on the ratio of the device control delays. Recall that $S$ is the only voltage region where oscillations can begin, and includes the projection of basis conditions \eqref{base_cond} onto the $(v_1,v_2)$ space.

\begin{figure}[!h]  
    \begin{subfigure}{0.24\textwidth}
         \centering
       \includegraphics[width=\textwidth]{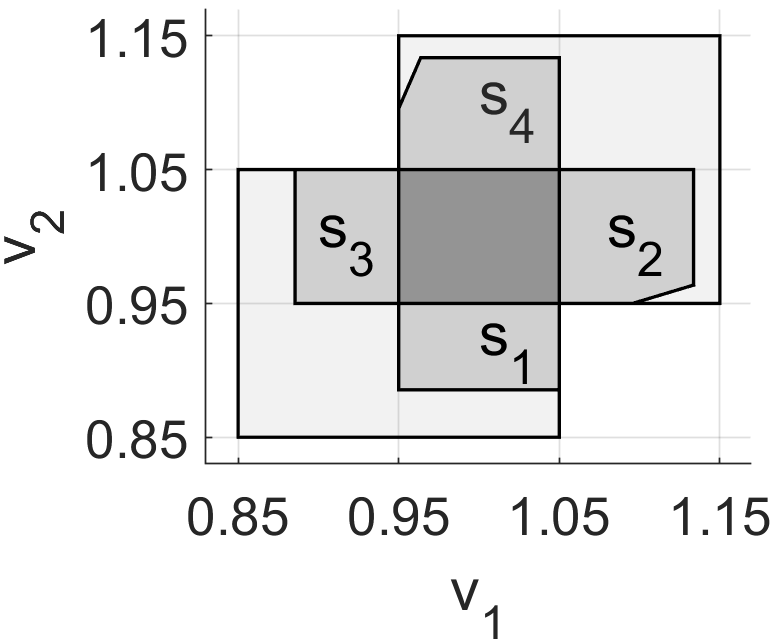}
        \caption{$s_i$ regions when $g=0.5$}
         \label{plotIC_gpos}
     \end{subfigure}
     \hfill
    \begin{subfigure}{0.24\textwidth}
         \centering
       \includegraphics[width=\textwidth]{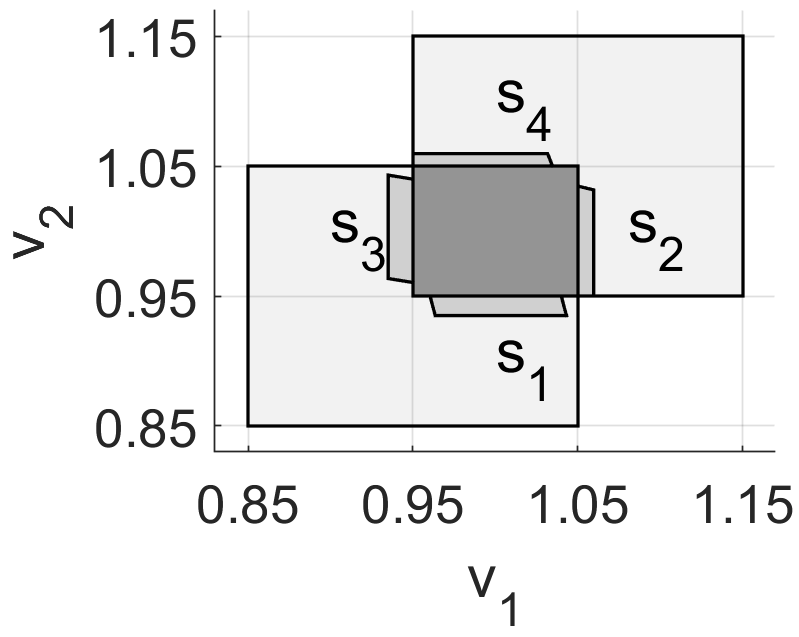}
        \caption{$s_i$ regions when $g=-0.3$}
         \label{plotIC_gneg}
     \end{subfigure}
     \caption{$S$ in medium gray on $(v_1,v_2)$ space; $W$ in light gray}
         \label{plotIC}
\end{figure}
 In Fig. \ref{plotIC} we plot $S$ on the $(v_1,v_2)$ space, and validate that $S$ in Fig. \ref{plotIC_gneg} is indeed a subset of $W_o$ from \eqref{Wo_4device}, as both sets have width of 0.0095 pu. Observe that some corners of $W_o$ are not included in $S$ because trajectories that start there leave $W$ before completing an oscillation period. We also observe that in Fig. \ref{plotIC_gneg} where oscillations can grow, the area of $S$ is smaller compared to when oscillations are damped in Fig. \ref{plotIC_gpos}. It is relieving that the regions where dangerous oscillations could occur are narrow.

    \begin{figure}[!h]  
       \centering 
       \includegraphics[width=.45\textwidth]{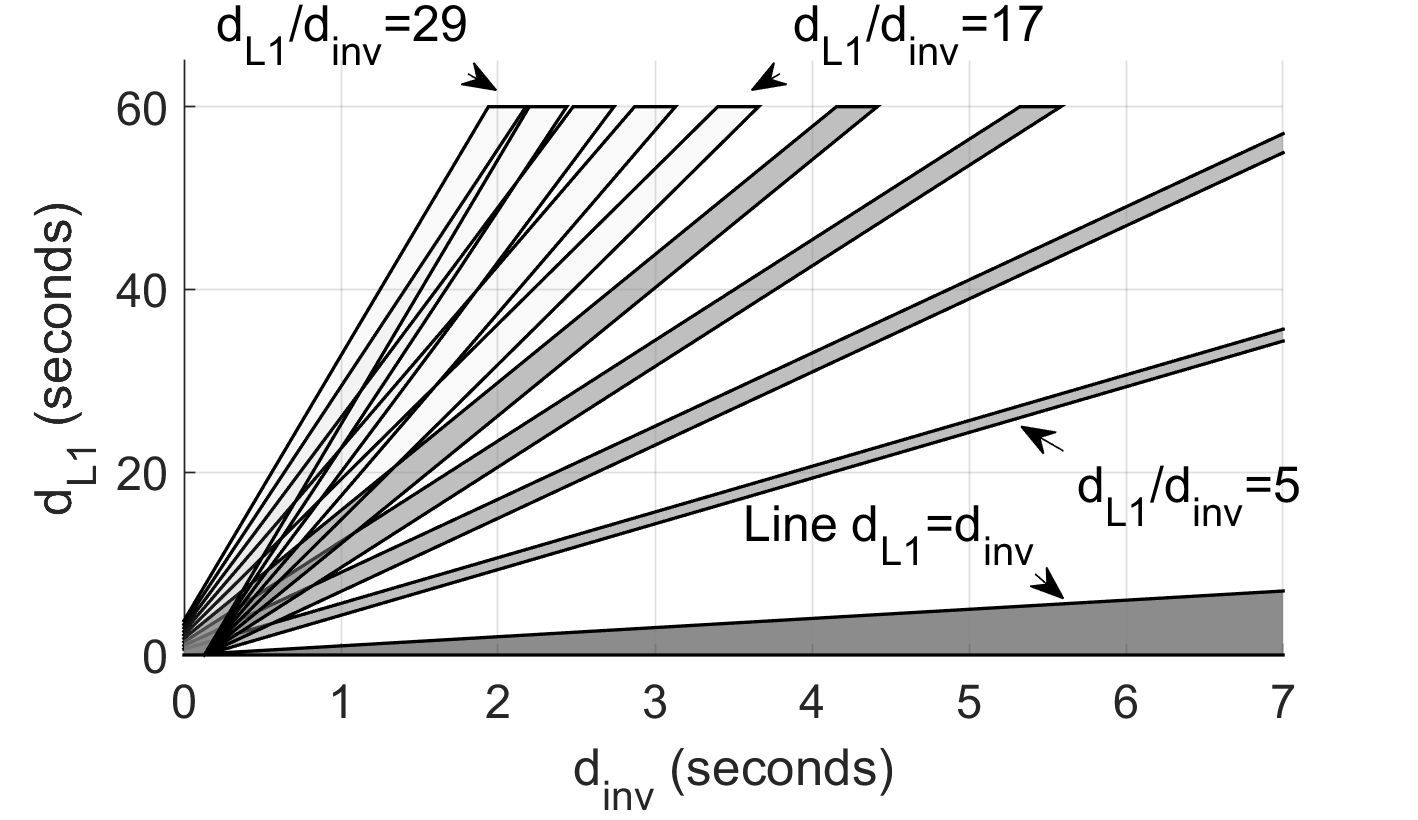}
        \caption{Lines for whether $S=\emptyset$ or $S\neq \emptyset$ on $(d_{L1},d_{inv})$ space; hunting impossible (possible) in light (dark) gray}
        \label{plotDelay} 
    \end{figure}
Next we examine the appropriate timescale separation between the inverters and LTC1 for system $\Sigma_3$ when $g<0$. Because $N_1= \text{floor}(d_{L1}/d_{inv})$ and $N_1$ enters $\Delta v^{inv}$ as the exponent (see \eqref{parametrized}), Each fixed $N_1$ represents the slope of a line through the origin on the $(d_{L1},d_{inv})$ space. As such, varying $d_{L1}$ and $d_{inv}$ such that $N_1$ is fixed will not change the $S$ region on $(v1,v2)$. Therefore, rather than iterate over the $(d_{L1},d_{inv})$ in a meshgrid, we compare the area of set $S$ for lines corresponding to different $N_1$ values. 
 
We set all variables except $(v_1[k_0]$ and $v_2[k_0])$ to the defaults in Table \ref{tab:base_parms}. Instead of $d_{L1}=30$ and $d_{L2}=40$, which have a ratio of $\frac{4}{3}$, we set $N_2=\frac{4}{3}N_1$ since $\frac{N_2}{N_1}=\frac{d_{L2}}{d_{L1}}$. We vary $N_1$ from 5 to 29. For each $N_1$ value, if $S=\emptyset$, then hunting oscillations are impossible for all $v[k_0] \in W$, and we mark the line with slope $N_1$ as light gray in Fig. \ref{plotDelay}. Conversely, if $S \neq \emptyset$ then the $N_1$ line is dark gray. The line $d_{inv}=d_{L1}$ is also marked to only allow the $d_{inv}<d_{L1}$ from Table \ref{tab:base_parms}. We observe that $N_1=17=\text{floor}(d_{L1}/d_{inv})$ is the borderline case where hunting is impossible. 
For example, if the inverters have a 3-second delay, the substation LTC needs at least a 51-second delay to prevent hunting for all IC in $W$.

\subsection{Grids with More than Four Devices}
Radial distribution grids are typically tree graph networks whose root is the substation. Each node has one or more \emph{branch(es)}, which is the tree network connected to that node's edge(s). Suppose there is at most one LTC and at most two actively controlled inverters on each branch of the root node $b$. Then each branch has a system comprised of the substation LTC, the LTC on the branch, and two inverters on the branch. 
One can characterize each system as $\Sigma_3$ and setup conditions to prevent oscillations separately using the methodology of section \ref{probform2_section}. The conditions across all branches must be jointly satisfied, but because the substation LTC is common to all subsystems, as $b$ grows the number of symbolic variables grows by $3b+1$ instead of $4b$.

\section{Conclusion}
\label{conclusion_section}

We have presented a novel hybrid system model for LTCs with inverters on radial distribution circuits. Leveraging the system dynamics, we have derived conditions on the control parameters to guarantee against voltage oscillations created by device hunting. The conditions inform the design of appropriate parameters, such as the minimal timescale separation of control delays between LTCs and inverters. The conditions also pave the way for implementing on-board certificates that guard against malicious firmware updates of control parameters. 

Future work will more formally investigate the types of events that widen the voltage difference enough for hunting to be possible. Additionally, the relationships between parameters captured by the conditions derived here will be examined in more detail. Finally, the behavior of the system in abnormal but not impossible operating states will be explored.

    \bibliography{citations.bib}
    \bibliographystyle{ieeetr}
\fi
\ifdefined\arxiv %
We include proofs of the main theorem and lemmas in-line, and refer readers to the appendix for the minor proofs.

\section{Appendix}

\noindent \textbf{Lemma \ref{lem_obv_2LTC}} %
    If $\bar{v}_L>2\varepsilon$, system $\Sigma_1$ will have marginally stable oscillations for all time when any $v_1[0] \in M(D,c)$ or $v_2[0] \in M(D,c)$ where $c=\bar{v}_L-2\varepsilon>0$ %
\begin{proof} 
    If $v_1[0] \in M(D,c)$, LTC1 taps, causing $\bar{v}_L$ to overshoot the deadband of width $2\varepsilon$ and land outside $D$. After a delay $d_{L1}$, LTC1 will tap in the opposite direction, landing at the system IC. Then these actions repeat, causing $v_1$ and $v_2$ to oscillate with constant amplitude for all time. When $v_2[0] \in M(D,c)$ we get the same oscillatory behavior. 
\end{proof} 

\noindent \textbf{Lemma \ref{lem_2LTC}.} %
    If $\bar{v}_L\leq 2 \varepsilon$, $v[T] \in W_o$, and $v^{\text{diff}}<2\varepsilon-\bar{v}_L$, system $\Sigma_1$ will have marginally stable oscillations starting at time $T$
\begin{proof}
    First consider an IC in $W_o$ (i.e. $T=0$) where $v_2[0]$ is an undervoltage. Overshooting the deadband with an LTC tap can be represented with
    \begin{align}
        v_2[0]&<v^- ~\text{start with undervoltage} \nonumber \\
        v_1[0]+\bar{v}_L&>v^+. \nonumber %
    \end{align}
     Combining the above two equations gives $v^{\text{diff}}>2\varepsilon-\bar{v}_L$. These conditions for $v_2[0]$ being an \emph{overvoltage} would yield the same $v^{\text{diff}}>2\varepsilon-\bar{v}_L$. Now consider where our IC is not in $W_o$ but at time $T$ we land in $W_o$. Because the system has no internal memory states, oscillations will begin after this nonzero $T$ as if the zero-start time was at $T$.
\end{proof}

\noindent \textbf{Lemma \ref{QV_int_diverge}.} %
    If $G \prec0$, system $\Sigma_2$ given by \eqref{QV_int_closedloop} has $v \rightarrow \pm \infty$. %
\begin{proof} 
    Let $F=-G$. Because $G \prec0$ and is diagonal, $F$  is diagonal and positive definite. The proof of Theorem 3.1 in \cite{Helou} shows that
    $eig(XS)=eig(S^{\frac{1}{2},\top}XS^{\frac{1}{2}}) \in (0,2)$ for symmetric $X$ and diagonal positive definite $S$. Thus $eig(XF)=eig(F^{\frac{1}{2},\top}XF^{\frac{1}{2}}) \in (0,2)$. Then all $eig(I-XG)=eig(I+XF)=1+eig(XF)=1+(0,2)>1$. Thus the spectral radius of $(I-XG)$ is greater than 1, so $e \rightarrow \infty$, and $v \rightarrow \infty$ or $v \rightarrow -\infty$. 
\end{proof}

\noindent \textbf{Lemma \ref{homogeneous}.} %
    The coupling effect of a single inverter actuating at node $i$ on the voltage at node $j$ is damped by a factor of $\eta$. That is, $\Delta v_j^{inv}(X_{ij},k+N,k,v_i[k])=\eta \Delta v_i^{inv}(X_{ii},k+N,k,v_i[k])$.
\begin{proof}
    From the update equations \eqref{v_update_scalar1} and \eqref{v_update_scalar2},
     \begin{align*}
         \Delta v_i^{inv}(X_{ii},k+1,k,v_i[k]) &=-\chi(v_i[k]-v^{ref})\\
        \Delta v_j^{inv}(X_{ij},k+1,k,v_i[k]) &=-\eta\chi(v_i[k]-v^{ref})\\
         &=\eta\Delta v_i^{inv}(X_{ii},k+1,k,v_i[k])
    \end{align*}
    By Remark \ref{remark_deltaV_sign}, the voltage change at each timestep over  $[0 ~N]$ is additive in the same direction. If we consider a duration over two timesteps,
     \begin{multline}
        \Delta v_j^{inv}(X_{ij},k+2,k,v_i[k]) =v[k+2]-v[k+1]+v[k+1]-v[k]\\=\Delta v_j^{inv}(\chi,k+2,k+1,v_i[k+1])+\Delta v_j^{inv}(\chi,k+1,k,v_i[k])\\=-\eta\chi(v_i[k+1]-v^{ref})-\eta\chi(v_i[k]-v^{ref})\\=\eta\Delta v_i^{inv}(X_{ii},k+2,k,v_i[k]) \nonumber
    \end{multline}
\end{proof}

\noindent \textbf{Lemma \ref{expansion}.} %
    If system $\Sigma_3$ with $g<0$ has oscillations, $v^{\text{diff}}$ increases after each oscillation period %
\begin{proof}
    We consider the case that the IC is below the deadband. We express the voltages after the first sequence of modes and compare $v^{\text{diff}}[k_0]$ to $v^{\text{diff}}[k_4]$
    \begin{align}
        v_1[k_4]=v_1[k_0]+\eta \Delta v_2^{inv}(k_0+N_2,k_0,v_2[k_0])+\bar{v}_L \nonumber \\+\Delta v_1^{inv}(k_0+N_1,k_0,c)-\bar{v}_L \notag\\
        v_2[k_4]=v_2[k_0]+ \Delta v_2^{inv}(k_0+N_2,k_0,v_2[k_0])+\bar{v}_L\nonumber \\+\eta \Delta v_1^{inv}(k_0+N_1,k_0,c)-\bar{v}_L \nonumber
    \end{align}
    where $c \coloneqq v_1[k_0]+\eta \Delta v_1^{inv}(k_0+N_2,k_0,v_2[k_0])-\bar{v}_L$. Subtracting the above two equations, we have
    \begin{align}
        \begin{split}
        v^{\text{diff}}&[k_4]= v^{\text{diff}}[k_0]+\eta \Delta v_2^{inv}(k_0+N_2,k_0,v_2[k_0]) \notag\\ &+\Delta v_1^{inv}(k_0+N_1,k_0,c) -\Delta v_2^{inv}(k_0+N_2,k_0,v_2[k_0])\notag\\&-\eta \Delta v_1^{inv}(k_0+N_1,k_0,c)-\bar{v}_L
        \end{split}\notag
    \end{align}
    which simplifies to
    \begin{align}
        \begin{split}
                v^{\text{diff}}[k_4] &- v^{\text{diff}}[k_0]=(\eta-1)(\Delta v_2^{inv}(v_2[0],N_2)\\&-\Delta v_1^{inv}(k_0+N_1,k_0,c)) \label{contract1}
        \end{split}
    \end{align}
    
    The damping factor has $0<\eta<1$, so $(\eta-1)$ is negative. $v_2[k_0]$ is below the deadband, so $\Delta v_2^{inv}(N_2,v_2[k_0])<0$, and $v_1[k_2]$ is above the deadband, so $\Delta v_1^{inv}(c,N_1)$ is positive. Together, \eqref{contract1} is always positive. Thus $v^{\text{diff}}[k_4]>v^{\text{diff}}[k_0]$.
    
    If we repeat this process for the case of the IC being an overvoltage, we get the same final equation \eqref{contract1}.
\end{proof}

\section{Acknowledgement}
The authors would like to express their sincere gratitude to Professor Murat Arcak and Lawrence Berkeley National Lab Scientist Daniel Arnold for their insightful comments.
    \bibliography{citations.bib}
    \bibliographystyle{ieeetr}
\fi

\end{document}